\def\endthebibliography{%
	\def\@noitemerr{\@latex@warning{Empty `thebibliography' environment}}%
	\endlist
}
\newtheorem{remark}{\textbf{Remark}}
\newtheorem{lem}{\textbf{Lemma}}
\newtheorem{theorem}{\textbf{Theorem}}
\newtheorem{assumption}{\textbf{Assumption}}
\newtheorem{definition}{\textbf{Definition}}
\newtheorem{corollary}{\textbf{Corollary}}
\title{\LARGE \bf
Undetectable Cyber Attacks on Communication Links in Multi-Agent Cyber-Physical Systems}
\author{Mahdi Taheri$^{1}$, Khashayar Khorasani$^{1}$, Iman Shames$^{2}$, and Nader Meskin$^{3}$
\thanks{$^{1}$Mahdi Taheri (m$\_$eri@encs.concordia.ca) and Khashayar Khorasani (kash@ece.concordia.ca) are with the Department of Electrical and Computer Engineering, Concordia University, Montreal, Canada.}%
\thanks{$^{2}$Iman Shames (iman.shames@unimelb.edu.au) is with the Department of Electrical and Electronic Engineering, University of Melbourne, Melbourne, Australia.}%
\thanks{$^{3}$Nader Meskin (nader.meskin@qu.edu.qa) is with the Department of Electrical Engineering, Qatar University, Doha, Qatar.}%
\thanks{The authors would like to acknowledge the financial support received from NATO under the Emerging Security Challenges Division program. K. Khorasani and N. Meskin would like to acknowledge the support received from NPRP grant number 10-0105-17017 from the Qatar National Research Fund (a member of Qatar Foundation). K. Khorasani would also like to acknowledge the support received from the Natural Sciences and Engineering Research Council of Canada (NSERC) and the Department of National Defence (DnD) under the Discovery Grant and DnD Supplemental Programs. The statements made herein are solely the responsibility of the authors.}
}
\begin{document}

\maketitle
\thispagestyle{empty}
\pagestyle{empty}

\begin{abstract}
	The objective in this paper is to study and develop conditions for a network of multi-agent cyber-physical systems (MAS) where a malicious adversary can utilize  vulnerabilities in order to ensure and maintain cyber attacks undetectable. We classify these  cyber attacks as undetectable in the sense that their impact cannot be observed in the generated residuals. It is shown if an agent that is the root of a rooted spanning tree in the MAS graph is under a cyber attack, the attack is undetectable by the entire network. Next we investigate if a non-root agent is compromised, then under certain conditions cyber attacks can become detectable. Moreover, a novel cyber attack that is designated as quasi-covert cyber attack is introduced that can be used to eliminate detectable impacts of cyber attacks to the entire network and maintain these attacks as undetected. Finally, an event-triggered based detector is proposed that can be used to detect the quasi-covert cyber attacks. Numerical simulations are provided to illustrate the effectiveness and capabilities of our proposed methodologies.
\end{abstract}

\section{Introduction}
Multi-agent systems (MAS) have wide range of applications as in smart grids, transportation systems, internet of things (IoT), and unmanned aerial vehicles (UAV). Due to their wide range of applications these cyber systems have been studied during the past two decades \cite{1333204,4804653,dfdfi}. One of the main problems in MAS is to reach consensus in a distributed manner. This problem has been addressed in the literature for various types of agents. In \cite{zhang2011optimal,li2009consensus}, observer-based consensus protocols have been proposed. Distributed consensus protocols require agents to transmit their information to the neighboring agents through a communication network. This communication network is prone and has vulnerabilities to cyber attacks.

In recent years, addressing security related problems of cyber-physical systems (CPS) has been a challenging issue \cite{adaii,ascf,siolcps,9029298,mustafaa,dibaji}. In \cite{adaii,ascf}, frameworks to study various types of cyber attacks on linear time-invariant (LTI) systems have been proposed. The impact of a certain type of cyber attack on MAS in which the adversary uses the model of the system to generate its attack signals has been studied in \cite{mustafaa}. It is shown if the root of the directed spanning tree contained in the network graph is under ``cyber-physical" attacks, the entire MAS can become unstable. In \cite{boem2017distributed}, a distributed methodology to detect cyber attacks on communication networks among interconnected systems and MAS that are equipped with consensus-based controllers has been proposed. However, conditions under which the adversary is capable of performing undetectable cyber attacks in MAS \underline{has not been} investigated in above references.

To increase security and reduce consumption of energy, one can employ an event-triggered protocol so  agents in MAS can communicate with one another. In \cite{zhang2013observer,yang2016decentralized,davoodiEvent}, various types of event-triggered observer-based methodologies have been proposed for linear MAS and LTI systems. The work in \cite{HajiEvent} has studied an event-triggered unit that can be used to simultaneously reach a consensus and detect faults in MAS.

In this paper, our goal is to investigate conditions under which cyber attacks on communication links of multi-agent CPS become undetectable. The MAS under having a directed communication graph is equipped with consensus-based observers that are under \textit{false data injection attacks}. Our first objective is to show that if an adversary attacks the root of the directed spanning tree of the communication graph with a certain attack signal, all the agents follow that attack signal and reach a new consensus point. This implies that one is injecting and introducing an undetectable cyber attack in the sense that residual signals in presence of cyber attacks approach to zero as time approaches to infinity.

Next, we investigate detectability conditions of cyber attacks on various teams of agents in case that non-root agents are under cyber attacks. Since attacking a non-root agent can become detectable to a certain group of agents, a cyber attack methodology that is denoted as ``\textit{quasi-covert cyber attack}" is introduced. Through this novel cyber attack, the adversary can use it to eliminate the impacts of cyber attacks on agents that can detect cyber attacks. Consequently, by performing quasi-covert attacks on non-root agents, the adversary is capable of concealing his/her attacks and making them undetectable. Finally, an event-triggered detector is proposed in this work that under certain conditions can be used to generate residuals that are sensitive to cyber attack signals \underline{even} in presence of injection of quasi-covert cyber attacks.

The main contributions of this paper are threefold. First, a definition is introduced and proposed that specifies characteristics of undetectable cyber attacks on MAS. Then conditions  on the graph topology and its Laplacian matrix along with detectors of MAS are developed so that an adversary is capable of performing undetectable cyber attacks. Moreover, if the above does not hold, we investigate under what conditions cyber attacks are detectable on a certain team of agents. Second, quasi-covert cyber attacks are introduced where malicious hackers can inject in order to maintain their attacks undetected provided only non-root agents are compromised. Finally, an event-triggered detector is proposed for quasi-covert cyber attacks that given its event-based communication strategy is more secure in comparison with conventional communication protocols.

\section{Preliminaries}\label{s:pre}

\subsection{Graph Theory}
A directed graph (digraph) $\mathcal{G}$ is defined with a set of vertices or nodes $\mathcal{V}=\{1, \, 2, \dots, \, N\}$ and the set $\mathcal{E} \subset \mathcal{V} \times \mathcal{V}$ that denotes the edges of the digraph. The pair of distinct vertices $\mathcal{G}: (i,j) \in \mathcal{E}$ defines an edge. Graph $\mathcal{G}$ is called directed if $(i,j) \in \mathcal{E}$ does not imply $(j,i) \in \mathcal{E}$. The matrix $\mathcal{A}=[a_{ij}] \in \mathbb{R}^{N \times N}$ denotes the adjacency matrix of $\mathcal{G}$, where $a_{ij}=1$ when there exists a link from node $j$ to $i$. The set $\mathcal{N}_i$ denotes the set of neighbors of the node $i$ which contains those nodes that have an edge to $i$. Moreover, $|\mathcal{N}_i|=d_i$, where $|\cdot|$ is the cardinality of the set. The Laplacian matrix of the graph $\mathcal{G}$ is defined as $L=D-\mathcal{A}$, where $D=\text{diag}(d_1, \, d_2, \dots, \, d_N)$ denotes the in-degree matrix. A directed path between nodes $i$ and $j$, that is denoted by $\mathcal{P}_{ij}$, is a sequence of edges that connects the node $i$ to the node $j$ and follows along the direction of edges.

\subsection{Model of MAS and State Estimation Module}
The dynamics of MAS consisting of $N$ agents is given by

\begin{equation}\label{e:agent_i}
\begin{split}
\dot{x}_i (t)&=A x_i (t)+Bu_i(t), \\
y_i(t)&= C x_i (t), \, \, \, \, i=1,\dots,N,
\end{split}
\end{equation}

\noindent where $x_i (t) \in \mathbb{R}^n$ denotes the state of the $i$-th agent , $u_i (t) \in \mathbb{R}^m$ denotes the control input of that agent, and $y_i (t) \in \mathbb{R}^p$ represents the output measurement of agent $i$. The known matrices $(A, \, B, \, C)$ are of appropriate dimensions and each agent is assumed to be controllable and observable.

Since only few states of agents are assumed to be measurable ($p <n$) one needs to design an observer-based controller. In \cite{zhang2011optimal}, for MAS in the form of \eqref{e:agent_i} an observer-based consensus protocol was proposed as follows:

\begin{equation}\label{e:obs_i}
\begin{split}
\dot{\hat{x}}_i (t) =& A \hat{x}_i (t)+ B u_i(t)-cF\zeta_i(t),\\
u_i (t)  =&  cK \epsilon_i (t),
\end{split}
\end{equation}

\noindent where $\hat{x}_i(t) \in \mathbb{R}^n$ denotes the observer state of the $i$-th agent, $\zeta_i(t)= \sum_{j \in \mathcal{N}_i}((y_j (t)-y_i (t))+C(\hat{x}_i (t) -\hat{x}_j (t)))$, $\epsilon_i(t)= \sum_{j \in \mathcal{N}_i}(\hat{x}_i (t) -\hat{x}_j (t))$, $c \in \mathbb{R}$ is a scalar, $F \in \mathbb{R}^{n \times p}$ is the observer gain matrix, and $K \in \mathbb{R}^{m\times n}$ denotes the control gain matrix to be selected. It should be noted that the observer \eqref{e:obs_i} is a special case of that in \cite{zhang2011optimal} in which the MAS do not have a leader.

\begin{assumption}\label{assu:spanning}
	The directed graph $\mathcal{G}$ contains at least one directed spanning tree. The set $V_\text{r}=\{i_\text{r}, \, i_{\text{r}}+1, \dots, i_{\text{r}}+N_\text{r}-1 \}$ contains agents that constitute as roots of directed spanning trees in $\mathcal{G}$ and $N_\text{r}$ denotes the number of these agents.
\end{assumption}

\begin{lem}[\cite{ren2010distributed,yang2016decentralized}]\label{lem:tree}
    The algebraic multiplicity of the eigenvalue $\lambda_0=0$ of the Laplacian matrix $L_0$ that belongs to the graph $\mathcal{G}_0$ is one if $\mathcal{G}_0$ contains a directed spanning tree. Furthermore, the other eigenvalues have positive real parts and the right and left eigenvectors associated with $\lambda_0$ are denoted by $\mathbf{1}_N$ and $r_0=[r_1, \dots, r_N]^\top \in \mathbb{R}^N$, respectively, where $\sum_{j=1}^{N}r_j=1$ with scalar $r_j$.
\end{lem}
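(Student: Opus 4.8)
The plan is to establish the three assertions of the lemma in turn, using the row-sum structure of the Laplacian, Gershgorin's disc theorem, and the directed Matrix-Tree theorem. First I would record that $L_0 = D - \mathcal{A}$ has, in its $i$-th row, diagonal entry $d_i$ and off-diagonal entries $-a_{ij}$ summing to $-d_i$; hence every row sum of $L_0$ is zero, i.e. $L_0 \mathbf{1}_N = 0$. This simultaneously shows that $\lambda_0 = 0$ is an eigenvalue and that $\mathbf{1}_N$ is an associated right eigenvector. To locate the remaining spectrum I would then apply Gershgorin's disc theorem: the $i$-th disc is centered at $d_i$ with radius $\sum_{j \ne i} a_{ij} = d_i$, so each disc is $\{ z : |z - d_i| \le d_i \}$, which lies in the closed right half-plane and meets the imaginary axis only at the origin. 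Consequently every eigenvalue of $L_0$ has nonnegative real part, and any eigenvalue lying on the imaginary axis must equal $0$.

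The core step is to show that $\lambda_0 = 0$ is a \emph{simple} root of the characteristic polynomial. Writing $\det(sI - L_0) = s^N + \cdots + c_1 s + c_0$, the fact that $0$ is an eigenvalue gives $c_0 = 0$, and the algebraic multiplicity of $0$ equals one precisely when $c_1 \ne 0$. I would identify $c_1$ (up to sign) with the sum of the $(N-1)$-dimensional principal minors of $L_0$, namely $\sum_i \det(L_0^{(i)})$, where $L_0^{(i)}$ deletes row and column $i$. By the directed Matrix-Tree theorem each such minor is nonnegative and equals a weighted count of spanning trees rooted at the corresponding node; under the spanning-tree hypothesis on $\mathcal{G}_0$ at least one of these minors is strictly positive, so $c_1 \ne 0$. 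This yields that $\lambda_0 = 0$ has algebraic multiplicity one, hence geometric multiplicity one as well, and combining with the Gershgorin step all other eigenvalues have strictly positive real parts.

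Finally I would treat the left eigenvector. Since the algebraic multiplicity of $0$ is one we have $\operatorname{rank}(L_0) = N-1$ and a one-dimensional left null space, yielding a vector $r_0$ with $r_0^\top L_0 = 0$. Because $0$ is a simple eigenvalue, its left and right eigenvectors cannot be orthogonal, so $r_0^\top \mathbf{1}_N = \sum_{j} r_j \ne 0$; rescaling $r_0$ then enforces the normalization $\sum_{j=1}^N r_j = 1$. I expect the main obstacle to be the algebraic-multiplicity claim rather than the geometric one: deducing $\operatorname{rank}(L_0) = N-1$ from nonsingularity of a reduced Laplacian is routine, but ruling out a nontrivial Jordan block at $0$ is what forces the finer characteristic-polynomial and Matrix-Tree argument above (or, equivalently, a Frobenius normal-form argument over the strongly connected components, in which exactly one diagonal block is a singular irreducible Laplacian and the others are strictly diagonally dominant and hence invertible).
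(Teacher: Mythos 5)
This lemma is stated in the paper purely as a citation (to the references on distributed coordination and decentralized event-triggered control); the paper itself contains no proof of it, so there is nothing internal to compare your argument against. On its own merits, your proof is correct and self-contained. The chain row-sums $\Rightarrow L_0\mathbf{1}_N=0$, Gershgorin $\Rightarrow$ spectrum in the closed right half-plane meeting the imaginary axis only at the origin, and the directed Matrix-Tree theorem $\Rightarrow$ the coefficient of $s^1$ in $\det(sI-L_0)$ is $(-1)^{N-1}\sum_i \det\bigl(L_0^{(i)}\bigr)\neq 0$, settles the \emph{algebraic} (not merely geometric) simplicity of $\lambda_0=0$, which is exactly the point most informal arguments gloss over; you correctly note that the sign convention of the paper's Laplacian ($a_{ij}=1$ for an edge $j\to i$, in-degree on the diagonal) makes the principal minor $\det\bigl(L_0^{(i)}\bigr)$ count spanning trees rooted at $i$ in the paper's sense of ``root reaches every node.'' The left-eigenvector step (simple eigenvalue $\Rightarrow$ left and right eigenvectors not orthogonal $\Rightarrow$ $\sum_j r_j\neq 0$, then rescale) is also sound. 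For comparison, the proofs in the cited literature typically proceed differently: either by rank/induction arguments showing $\operatorname{rank}(L_0)=N-1$ together with a separate argument excluding a Jordan block at $0$, or by shifting to a nonnegative (stochastic-like) matrix $I-L_0/\gamma$ and invoking Perron--Frobenius-type results; your Matrix-Tree route buys a more direct handle on the algebraic multiplicity, at the price of invoking a heavier combinatorial theorem, while your closing remark about the Frobenius normal form over strongly connected components is essentially the skeleton of the literature's argument.
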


\begin{definition}\label{def:consensus}
	For the MAS \eqref{e:agent_i}, consensus is achieved if for any initial condition
		$$\lim_{t \rightarrow \infty} \|y_i(t) -y_j(t)\|=0, \, \lim_{t \rightarrow \infty} \|\hat{x}_i(t) -\hat{x}_j(t)\|=0,$$
	
	\noindent $\forall \, i,j=1,\dots,N$, where $\| . \|$ denotes the Euclidean norm.
\end{definition}

It has been shown in \cite{zhang2011optimal} that if the Assumption \ref{assu:spanning} holds, the observer and control parameters in \eqref{e:obs_i} can be designed such that the closed-loop system reaches a consensus in  sense of the Definition \ref{def:consensus}.

\begin{assumption}\label{assu:observer_design}
	Parameters of the observer-based consensus protocol \eqref{e:obs_i} are designed such that  conditions in Theorem 4 in \cite{zhang2011optimal} are satisfied. Hence, the closed-loop system \eqref{e:agent_i} and \eqref{e:obs_i} reach a consensus under cyber attack free conditions.
\end{assumption}

\begin{lem}[\cite{yang2016decentralized}]\label{lem:event}
	Given the Hurwitz matrix $H \in \mathbb{R}^{n \times n}$, there exists a nonsingular matrix $P_\text{H}$ that satisfies $P_\text{H}^{-1} H P_\text{H}=J_\text{H}$ and $\|e^{Ht} \| \leq \|P_\text{H}\| \|P_\text{H}^{-1} \| c_\text{H}e^{\lambda_\text{H}^\text{m} t}$, $\forall \, t \geq 0$, where $J_\text{H}$ denotes the Jordan canonical form of $H$, $c_\text{H} >0$ is a positive constant, and $\text{max} \, \text{Re}(\lambda(H)) < \lambda_\text{H}^\text{m} <0$.
\end{lem}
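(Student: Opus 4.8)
The plan is to exploit the Jordan decomposition of $H$ and reduce the bound on $\|e^{Ht}\|$ to a bound on the matrix exponential of the Jordan form $J_\text{H}$. First, since $H$ is a real square matrix, standard linear algebra guarantees a nonsingular $P_\text{H}$ with $P_\text{H}^{-1} H P_\text{H} = J_\text{H}$, where $J_\text{H}$ is block diagonal with Jordan blocks. Because $H = P_\text{H} J_\text{H} P_\text{H}^{-1}$, the matrix exponential factorizes as $e^{Ht} = P_\text{H}\, e^{J_\text{H} t}\, P_\text{H}^{-1}$, and submultiplicativity of the induced norm yields $\|e^{Ht}\| \leq \|P_\text{H}\|\,\|P_\text{H}^{-1}\|\,\|e^{J_\text{H}t}\|$. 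Thus it remains to show that $\|e^{J_\text{H}t}\| \leq c_\text{H}\, e^{\lambda_\text{H}^\text{m} t}$ for all $t \geq 0$ with some $c_\text{H}>0$, which would deliver the claimed estimate after reinserting the constant factors $\|P_\text{H}\|$ and $\|P_\text{H}^{-1}\|$.

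Next, I would compute $e^{J_\text{H}t}$ block by block. For a single Jordan block of size $k$ associated with eigenvalue $\lambda_i$, the exponential equals $e^{\lambda_i t}$ multiplied by an upper-triangular contribution whose nonzero entries are the monomials $t^j/j!$ for $j=0,\dots,k-1$. Hence every entry of $e^{J_\text{H}t}$ is bounded in magnitude by a term of the form $q(t)\,e^{\text{Re}(\lambda_i)t}$, where $q(t)$ is a polynomial whose degree is at most $\max_i (k_i - 1)$, determined by the largest Jordan block.

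The crucial step is to absorb this polynomial growth into the prescribed decay rate $\lambda_\text{H}^\text{m}$. Since $H$ is Hurwitz and $\lambda_\text{H}^\text{m}$ is chosen with $\max_i \text{Re}(\lambda_i) < \lambda_\text{H}^\text{m} < 0$, each exponent satisfies $\text{Re}(\lambda_i) - \lambda_\text{H}^\text{m} < 0$. Writing $q(t)\,e^{\text{Re}(\lambda_i)t} = \left( q(t)\, e^{(\text{Re}(\lambda_i)-\lambda_\text{H}^\text{m})t} \right) e^{\lambda_\text{H}^\text{m} t}$, the bracketed factor is a polynomial times a strictly decaying exponential, hence continuous on $[0,\infty)$ and tending to $0$ as $t\to\infty$, so it is bounded by some finite constant. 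Collecting such bounds over all entries and blocks produces a finite $c_\text{H}>0$ with $\|e^{J_\text{H}t}\| \leq c_\text{H}\, e^{\lambda_\text{H}^\text{m} t}$, which combined with the factorization completes the argument. The main obstacle is precisely this boundedness argument for nontrivial Jordan blocks: if $H$ were diagonalizable the polynomials would all be constant and the estimate would be immediate, so the genuine work lies in controlling the $t^j$ terms produced by repeated eigenvalues, which is exactly why the bound is stated with a strictly larger rate $\lambda_\text{H}^\text{m}$ rather than with $\max_i \text{Re}(\lambda_i)$ itself.
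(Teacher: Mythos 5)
Your proof is correct and complete: the paper itself does not prove this lemma (it is quoted directly from the cited reference), and the standard argument behind it is exactly the one you give --- Jordan decomposition, submultiplicativity to reduce to $\|e^{J_\text{H}t}\|$, and absorption of the polynomial factors from nontrivial Jordan blocks into the strict gap $\mathrm{Re}(\lambda_i) - \lambda_\text{H}^\text{m} < 0$. The only cosmetic remark is that $P_\text{H}$ may need to be complex (the Jordan form of a real matrix is generally complex), which changes nothing in the norm estimates.
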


\section{Problem Formulation}\label{s:pf}

\subsection{Cyber Attacks on MAS and Residual Generation}
The state estimator \eqref{e:obs_i} for the $i$-th agent receives information from the agent $j \in \mathcal{N}_i$ that is represented as pair $p_{ji}(t)=(\hat{x}_j (t), y_j (t))$. Since the communication links among agents are vulnerable to cyber attacks, the adversary is capable of injecting attack signals $a_{\hat{x}}^{ji}(t) \in \mathbb{R}^n$ and $a_{y}^{ji}(t) \in \mathbb{R}^p$ to $p_{ji}(t)$. Therefore, the agent $i$ under cyber attacks receives the manipulated pair $p^a_{ji}(t)=(\hat{x}_j (t)+a_{\hat{x}}^{ji}(t), y_j (t)+a_{y}^{ji}(t))$ from the agent $j$.

\begin{assumption}\label{assu:allattacked}
	We consider the worst case scenario cyber attack in which case all the incoming communication links of a given agent are being compromised.
\end{assumption}

\begin{remark}
	Cyber attacks on MAS in this paper can be considered as the ``man-in-the-middle" type of attack \cite{man}. In this cyber attack type the adversary first blocks the received information from a group of neighboring agents of a compromised agent. Following this the adversaries inject their attack signals as legitimate information that are then received from a group of neighboring agents and transmitted to the targeted agent.
\end{remark}

Consequently, the closed-loop equations of the MAS \eqref{e:agent_i} and the observer \eqref{e:obs_i} under cyber attacks can be represented in the following form:

\begin{IEEEeqnarray}{rCl}
	\dot{x}_i (t)&=& A x_i (t)+BcK \epsilon_i (t)-BcK \sum_{j \in \mathcal{N}_i} q_{i}a_{\hat{x}}^{ji}(t), \label{e:agent_i_attack} \\
	\dot{\hat{x}}_i (t) & = & A \hat{x}_i (t) + B cK \epsilon_i (t)-cF\zeta_i(t)-BcK \sum_{j \in \mathcal{N}_i} q_{i}a_{\hat{x}}^{ji}(t) \nonumber \\
	&& -cF\sum_{j \in \mathcal{N}_i}q_{i}(a_{y}^{ji}(t)-Ca_{\hat{x}}^{ji}(t)), \, \, \, \, i=1,\dots,N, \label{e:obs_i_attack}
\end{IEEEeqnarray}

\noindent where $q_{i}=1$ if the incoming communication links to the agent $i$ are compromised and under cyber attack, and $q_{i}=0$, otherwise.

To detect anomalies in the $i$-th agent or its neighboring agents one can utilize the received information from the neighboring agents and generate the following residual signals:
\begin{IEEEeqnarray}{rCl}
	res^i_y(t)&=& \sum_{j \in \mathcal{N}_i} \|(y_j (t)-y_i (t))+q_{i}a_{y}^{ji}(t) \|, \label{e:res_i_y} \\
	res^i_{\hat{x}} (t)&=& \sum_{j \in \mathcal{N}_i} \|(\hat{x}_i (t)-\hat{x}_j (t))-q_{i}a_{\hat{x}}^{ji}(t) \|. \label{e:res_i_x}
\end{IEEEeqnarray}

\subsection{Objectives}
We are pursuing  three main objectives in this paper. First, we introduce a formal definition for undetectable cyber attacks on MAS in the sense that residuals \eqref{e:res_i_y} and \eqref{e:res_i_x} converge to zero as time approaches to infinity. Following that we show that under certain conditions on the network the cyber attacks on a given agent which happens to be the root of the communication graph can become undetectable. The second objective is to introduce a novel type of undetectable cyber attacks that are designated as \textit{quasi-covert cyber attacks} in which if a group of non-root agents are compromised, still the adversary is capable of eliminating and hiding the impact of cyber attacks on agents that could otherwise detect them. The final objective is to develop an event-triggered based detector with the goal of detecting quasi-covert cyber attack signals that are developed in the second objective.

\section{Undetectable Cyber Attacks}\label{s:undetectable}
In this section, conditions to introduce and inject undetectable cyber attacks are investigated and developed. Let us define $\check{A}=I_2 \otimes A$, $\check{K}=I_2 \otimes cK$, $\check{C}=I_2 \otimes C$, and

\begin{equation}\label{e:def1}
\begin{split}
\check{B}=& \begin{bmatrix}
0 & B \\
0 & B
\end{bmatrix}, \,
\check{F} =\begin{bmatrix}
0 & 0 \\
cF & -cF
\end{bmatrix}, \\
 \check{B}_\text{a} =& \begin{bmatrix}
-BcK & 0 \\
-BcK-cFC & cF
\end{bmatrix}.
\end{split}
\end{equation}

\noindent By utilizing \eqref{e:def1} and augmenting the dynamics \eqref{e:agent_i_attack} and \eqref{e:obs_i_attack} yields:
\begin{equation}\label{e:sys_aug_1}
\begin{split}
\dot{\check{x}}_i (t) =& \check{A} \check{x}_i(t)+ (\check{B} \check{K} - \check{F} \check{C}) \sum_{j \in \mathcal{N}_i}(\check{x}_i (t)-\check{x}_j (t)) \\
 &+\check{B}_\text{a} \check{a}_{i} (t),
\end{split}
\end{equation}

\noindent where $\check{x}_i(t)= [x_i(t)^\top ,\, \hat{x}_i(t)^\top]^\top$, $\check{a}_{i} (t)= \sum_{j \in \mathcal{N}_i} q_{i} \check{a}_{ji}$, and $\check{a}_{ji}(t)=[a_{\hat{x}}^{ji}(t)^\top , \, a_{y}^{ji}(t)^\top]^\top$.
\begin{definition}\label{def:undetectable}
	The cyber attacks on communication links in the MAS \eqref{e:sys_aug_1} are undetectable by the entire network if the MAS reaches a consensus under attack free conditions, i.e., $\check{a}_{i} (t)=0$, according to the Definition \ref{def:consensus}, and in presence of cyber attacks, i.e., $\check{a}_{i} (t) \neq 0$, $\forall \, t>0$, the following equations are satisfied:
	\begin{IEEEeqnarray}{rCl}
		\lim_{t \rightarrow \infty} \|y_i(t) -y_j(t)-q_{i}a_{y}^{ji}(t)\|=0, \label{e:cond_undetec_1} \\
		\lim_{t \rightarrow \infty} \|\hat{x}_i(t) -\hat{x}_j(t)-q_{i}a_{\hat{x}}^{ji}(t)\|=0, \label{e:cond_undetec_2}
	\end{IEEEeqnarray}

	\noindent $\forall \, i,j=1,\dots,N$.
\end{definition}

\begin{remark}
	If $\check{a}_{i} (t)=0$ and the MAS reaches a consensus, the residuals \eqref{e:res_i_y} and \eqref{e:res_i_x} converge to zero as $t \rightarrow \infty$ as expected since no cyber attack is injected to the system. If in presence of cyber attacks, i.e., $\check{a}_{i} (t) \neq 0$, equations \eqref{e:cond_undetec_1} and \eqref{e:cond_undetec_2} are satisfied, it can be inferred that $\lim_{t \rightarrow \infty} res^i_y(t)=0$ and $\lim_{t \rightarrow \infty} res^i_{\hat{x}} (t)=0$ for $i=1,\dots,N$. Consequently, the cyber attack is undetectable.
\end{remark}

\begin{theorem}\label{th:root_attack}
	Under Assumptions \ref{assu:spanning}-\ref{assu:allattacked}, let the cyber attacks on the agent $i$ in \eqref{e:sys_aug_1} be selected as $a_{\hat{x}}^{ji}(t)={a}_0- \hat{x}_j (t)$ and $a_{y}^{ji}(t)=C {a}_0- y_j (t)$ for $j \in \mathcal{N}_i$, where ${a}_0 \in \mathbb{R}^n$ is a constant vector. The above cyber attacks are undetectable by all  agents if $i \in V_\text{r}$, and $A+BcK$ and $A+\tilde{\lambda}_k^\text{r} cFC$ for $k=2, \dots, \, N+1$ are Hurwitz, where $\tilde{\lambda}_k^\text{r} \neq 0$ is the $k$-th eigenvalue of the matrix $$\tilde{L}_\text{r}= \begin{bmatrix}
	0 & \mathbf{0}_{1 \times N} \\
	-D_\text{r} & {L}_\text{r} \\
	\end{bmatrix},$$
	where $D_\text{r}=[\mathbf{0}_{i-1}^\top, \, d_{i},\, \mathbf{0}_{N-i}^\top]^\top$, ${L}_\text{r}=L+Q_\text{r}\mathcal{A}$, $Q_\text{r}= \text{diag}(\mathbf{0}_{i-1},q_i,\mathbf{0}_{N-i})$, and $\mathbf{0}_{m \times n} \in \mathbb{R}^{m \times n}$ denotes a matrix with all entries equal to 0.
\end{theorem}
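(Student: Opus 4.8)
The plan is to read the prescribed attack as a covert leader--follower reconfiguration of the network and then reduce undetectability to the stability of two decoupled error subsystems. First I would substitute the attack signals into the pair that agent $i$ actually receives: since $\hat{x}_j+a_{\hat{x}}^{ji}=a_0$ and $y_j+a_y^{ji}=Ca_0$ for every $j\in\mathcal{N}_i$, the compromised agent $i$ perceives all of its neighbours as a single virtual agent frozen at the constant state $a_0$ with output $Ca_0$. Thus the attack turns $i$ into a follower pinned to a virtual leader carrying $a_0$, and it erases exactly the genuine incoming edges of $i$, which is what the construction $L_\text{r}=L+Q_\text{r}\mathcal{A}$ encodes. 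Appending the leader node together with its single edge into $i$ of weight $d_i$ produces the augmented graph whose Laplacian is the $(N+1)\times(N+1)$ matrix $\tilde{L}_\text{r}$ in the statement.

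Second I would pin down the spectrum of $\tilde{L}_\text{r}$. The leader node has no incoming edge, hence it is unreachable from any agent, while it reaches every agent through $i$; because $i\in V_\text{r}$ is a root of $\mathcal{G}$, deleting its \emph{incoming} edges leaves all of its outgoing paths intact, so $i$ still reaches every node. Consequently the leader is the \emph{unique} root of the augmented graph, that graph contains a directed spanning tree, and Lemma~\ref{lem:tree} applies: $\tilde{L}_\text{r}$ has a simple zero eigenvalue with right eigenvector $\mathbf{1}_{N+1}$ and all remaining eigenvalues $\tilde{\lambda}_k^\text{r}$, $k=2,\dots,N+1$, in the open right half plane. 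Since $\tilde{L}_\text{r}$ is block lower triangular with diagonal blocks $0$ and $L_\text{r}$, these $\tilde{\lambda}_k^\text{r}$ are precisely the eigenvalues of $L_\text{r}$, so $L_\text{r}$ is nonsingular.

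Third I would change coordinates in \eqref{e:sys_aug_1} to the observer error $e_i=\hat{x}_i-x_i$ and the tracking error $\hat{x}_i-a_0$. Applying the per-agent similarity $[x_i^\top,\hat{x}_i^\top]^\top\mapsto[x_i^\top,e_i^\top]^\top$ block-triangularizes each modal matrix $\check{A}+\lambda(\check{B}\check{K}-\check{F}\check{C})$ into the diagonal blocks $A+\lambda BcK$ (consensus/tracking) and $A+\lambda cFC$ (observer error). The decisive simplification is that, under the attack, substituting the injected signals into $\zeta_i$ cancels the neighbours' contributions, so the stacked observer error becomes autonomous and is driven by $L_\text{r}$ alone; its modes are the matrices $A+\tilde{\lambda}_k^\text{r}cFC$, and therefore $e\to 0$ under the stated Hurwitz hypothesis. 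Once $e\to 0$ the cross-coupling $\zeta_i$ vanishes and the tracking error evolves on the consensus manifold with stability governed by $A+BcK$, the zero (leader) mode holding the network at the constant value $a_0$; one verifies here that $a_0$ is compatible with the free dynamics of that zero mode. This yields $\hat{x}_k\to a_0$, hence $x_k\to a_0$ and $y_k\to Ca_0$, for all $k=1,\dots,N$.

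Finally I would substitute these limits into the residuals \eqref{e:res_i_y}--\eqref{e:res_i_x}: for the compromised agent $res^i_{\hat{x}}$ and $res^i_y$ collapse to $d_i\|\hat{x}_i-a_0\|$ and $d_i\|C(x_i-a_0)\|$, both vanishing, while for every uncompromised agent they reduce to ordinary disagreements $\|\hat{x}_k-\hat{x}_j\|$ and $\|y_k-y_j\|$, which vanish at the common value $a_0$; this is exactly the undetectability of Definition~\ref{def:undetectable}. I expect the main obstacle to be the third step: showing that the man-in-the-middle terms cancel in $\zeta_i$ so that the observer-error subsystem sees $L_\text{r}$ (and thus the spectrum of $\tilde{L}_\text{r}$) rather than $L$, and that the residual cross-coupling decays, so that the two Hurwitz conditions genuinely suffice. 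Establishing that the leader is the unique root after the incoming edges of $i$ are deleted---so that Lemma~\ref{lem:tree} is applicable---is the structural linchpin tying the root assumption $i\in V_\text{r}$ to undetectability.
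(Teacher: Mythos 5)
Your proposal is correct and follows essentially the same route as the paper's own proof: the attack is recast as a constant virtual leader agent pinned to the root $i$, the augmented Laplacian $\tilde{L}_\text{r}$ contains a directed spanning tree rooted at that leader so that Lemma~\ref{lem:tree} applies, and stability of the disagreement/modal dynamics is reduced to the two Hurwitz conditions via the block-triangularizing change of coordinates $(x_i,\hat{x}_i)\mapsto(x_i,\hat{x}_i-x_i)$. The differences---you perform the per-agent triangularization before rather than after the modal decomposition, and you spell out the graph-theoretic root argument and the residual limits explicitly---are cosmetic refinements of the same argument.
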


\begin{proof}
	By adding the cyber attack signals to \eqref{e:sys_aug_1} and augmenting the dynamics of agents, one obtains
	\begin{equation*}
	\begin{split}
	\dot{\check{x}} (t)=(I_N \otimes \check{A}+{L}_\text{r} \otimes (\check{B} \check{K}- \check{F} \check{C})) \check{x}(t)+ D_\text{r} \otimes \check{B} \check{a}_0,
	\end{split}
	\end{equation*}
	
	\noindent where $\check{x}(t)=[\check{x}_1(t)^\top , \, \check{x}_2(t)^\top, \dots, \, \check{x}_N(t)^\top]^\top$, $\check{a}_0 = [a_0^\top , (C a_0)^\top]^\top$. One can consider the adversary as a virtual agent that transmits its information $a_0$ and $C a_0$ to the $i$-th agent. Since from the agent $i$ there exists paths to the rest of $N-1$ agents the virtual adversary agent is the root of a directed spanning tree that is contained in the graph $\mathcal{G}_\text{r}$ with $\tilde{L}_\text{r}$ as its Laplacian matrix. Hence, the state space representation of the MAS augmented with the virtual adversary agent can be represented as follows:
	\begin{equation*}
	\begin{split}
	\dot{\tilde{x}}_\text{r} (t)= (\tilde{I}_\text{r} \otimes \check{A}+\tilde{L}_\text{r} \otimes (\check{B} \check{K} - \check{F} \check{C})) \tilde{x}_\text{r} (t),
	\end{split}
	\end{equation*}
	
	\noindent where $\tilde{x}_\text{r} (t)=[\mathbf{1}_2^\top \otimes {a}_0^\top , \, \check{x}(t)^\top]^\top$, $\tilde{I}_\text{r}=\begin{bmatrix}
	0 & \mathbf{0}_{1 \times N} \\
	\mathbf{0}_{N } & I_{N}
	\end{bmatrix}$, and $\mathbf{1}_n \in \mathbb{R}^n$ denotes a vector with all its elements equal to $1$. In similar manner as in \cite{li2009consensus}, the disagreement vector $\delta_\text{r}(t)=\tilde{x}_\text{r}(t)-(\mathbf{1}_{N+1} \tilde{r}_\text{r}^\top \otimes I_{2n})\tilde{x}_\text{r}(t)$ is defined, where $\tilde{r}_\text{r}$ is the left eigenvector of $\tilde{L}_\text{r}$ as defined in the Lemma \ref{lem:tree}. Consequently, we obtain
	\begin{equation}\label{e:delta_r}
	\dot{\delta}_r(t)=(\tilde{I}_\text{r} \otimes \check{A}+\tilde{L}_\text{r} \otimes (\check{B} \check{K} - \check{F} \check{C}))\delta_\text{r} (t).
	\end{equation}
	
	\noindent It can be shown from the definition of $\delta_r (t)$ that $\delta_r (t)=0$ if and only if $\mathbf{1}_2 \otimes {a}_0= \check{x}_{1}(t)= \, \check{x}_{2}(t)= \dots = \, \check{x}_{N}(t)$. Therefore, if \eqref{e:delta_r} is stable, one can conclude that all the agents reach ${a}_0$ corresponding to the consensus set point.

	From Lemma \ref{lem:tree} and considering that $\mathcal{G}_r$ contains a directed spanning tree it follows that there exist matrices $T \in \mathbb{R}^{ N+1 \times N+1}$, $Y \in \mathbb{R}^{N+1 \times N}$, $W \in \mathbb{R}^{N \times N+1}$, and block diagonal matrix $\Delta \in \mathbb{R}^{N \times N}$ with diagonal entries equal to nonzero eigenvalues of $\tilde{L}_\text{r}$ such that \cite{li2009consensus}:
	\begin{equation*}
	T=[\mathbf{1}_{N_{}+1} \, Y], \, T^{-1}= \begin{bmatrix}
	\tilde{r}^\top \\
	W
	\end{bmatrix}, \, T^{-1} \tilde{L} T= J=\begin{bmatrix}
	0 & \mathbf{0}_{1 \times N} \\
	\mathbf{0}_{N} & \Delta
	\end{bmatrix}.
	\end{equation*}
	
	Let us define $\varepsilon (t)=(T^{-1} \otimes I_{2n})\delta_\text{r} (t)=[\varepsilon_1 (t)^\top, \, \varepsilon_{2:N+1} (t)^\top]^\top$. It follows from Lemma \ref{lem:tree} and definition of $\delta_\text{r} (t)$ that $	\varepsilon_1 (t) = \mathbf{0}_{2n}$ and
	\begin{equation}
	\dot{\varepsilon}_{2:N+1} (t) = (I_{N} \otimes \check{A}+\Delta \otimes (\check{B} \check{K} - \check{F} \check{C}))\varepsilon_{2:N+1} (t).
	\end{equation}
	
	\noindent It is easy to show that the matrices $I_{N} \otimes \check{A}+\tilde{\lambda}_k^\text{r} \otimes (\check{B} \check{K} - \check{F} \check{C})$ for $k=2, \dots, \, N+1$ having a similar structure to $\begin{bmatrix}
	A+\tilde{\lambda}_k^\text{r} cFC & \mathbf{0}_{n \times n} \\
	-\tilde{\lambda}_k^\text{r} cFC & A+BcK
	\end{bmatrix}.$
	
	Consequently, if $A+BcK$ and $A+\tilde{\lambda}_k^\text{r} cFC$ for $k=2, \dots, \, N+1$ are Hurwitz, \eqref{e:delta_r} is stable and states of all the agents reach $a_0$ as $t \rightarrow \infty$. Hence, according to the Definition \ref{def:undetectable}, the cyber attacks are undetectable from the entire network.
\end{proof}

\subsection{Cyber Attacks Injected to Non-Root Agents}
In this subsection, cyber attacks that are injected to the communication links of agents that do not belong to the set $V_\text{r}$ are investigated.

\begin{definition}\label{def:directly_attacked}
	The agents in the network that are directly under cyber attacks and their incoming communication channels are compromised are included in the set $V_\text{da}=\{i_\text{a}, \, i_{\text{a}}+1, \dots, \, i_{\text{a}}+N_\text{da}-1 \}$, where $N_\text{da}$ is the number of directly attacked agents.
\end{definition}

\begin{definition}[\cite{ugrinovskii2013conditions}]\label{def:reachable}
	The reachable subgraph of the vertex $i$, $R(i)$, is defined as the vertex subgraph that contains the node $i$ and all reachable nodes from it.
\end{definition}

\begin{definition}\label{def:healthy}
	The directed path between the vertices $i$ and $j$, denoted by $\mathcal{P}_{ij}$, is a directed healthy path if none of the communication links on this path is compromised by adversaries.
\end{definition}

By utilizing the Definitions \ref{def:directly_attacked}-\ref{def:healthy}, agents in the network can be partitioned into two groups. For agents from the root nodes of the directed tree, i.e., the set $V_\text{r}$, there exist directed healthy paths that constitute the first group. Agents in this group are designated as ``\textit{uncompromised}" where their states can be stacked into the vector $x_\text{nc} (t)=[\check{x}_1 (t)^\top, \, \check{x}_2 (t)^\top, \dots , \, \check{x}_{N_\text{nc}} (t)^\top]^\top$. The second group consists of the remaining network agents. This group contains agents in the vertex subgraphs of directly attacked agents, i.e., the set $V_\text{da}$, such that there does not exist any directed healthy path among agents in $V_\text{r}$ and these nodes. The agents that belong to this group are designated as ``\textit{attacked}" agents and $x_\text{a}(t) = [\check{x}_{N_\text{nc}+1}(t)^\top, \, \check{x}_{N_\text{nc}+2}(t)^\top, \dots , \, \check{x}_{N}(t)^\top]^\top$ represents the states of this group. The subscripts ``\textit{nc}" and ``\textit{a}" are employed to denote the $N_\text{nc}$ uncompromised agents and the $N_\text{a}$ attacked agents, respectively. Without loss of any generality, we assume that the first $N_\text{nc}$ agents are not under cyber attacks. Consequently, the Laplacian matrix is partitioned into the following form:
\begin{equation}
L=\begin{bmatrix}
L_\text{nc} & l_\text{nca} \\
l_\text{anc} & L_\text{a}
\end{bmatrix},
\end{equation}

\noindent where $L_\text{nc}\in \mathbb{R}^{N_\text{nc} \times N_\text{nc}}$, $L_\text{a} \in \mathbb{R}^{N_\text{a} \times N_\text{a}}$, $l_\text{nca} \in \mathbb{R}^{N_\text{nc} \times N_\text{a}}$, and $l_\text{anc} \in \mathbb{R}^{N_\text{a} \times N_\text{nc}}$.

The state space representation of the entire MAS can now be described in the following form:
\begin{IEEEeqnarray}{rCl}
	\dot{x}_\text{a} (t) &=& A_\text{a} x_\text{a} (t)+A_\text{anc}x_\text{nc} (t)+B_\text{a} a(t), \label{e:MAS_a} \\
	\dot{x}_\text{nc} (t) &=& A_\text{nc} x_\text{nc} (t)+A_\text{nca}x_\text{a} (t) , \label{e:MAS_nc}
\end{IEEEeqnarray}

\noindent where $A_\text{a}=I_{N_\text{a}} \otimes \check{A}+L_\text{a} \otimes (\check{B} \check{K}-\check{F} \check{C})$, $A_\text{anc}=l_\text{anc}\otimes (\check{B} \check{K}-\check{F} \check{C})$, $A_\text{nc}= I_{N_\text{nc}} \otimes \check{A}+L_\text{nc} \otimes (\check{B} \check{K}-\check{F} \check{C})$, $A_\text{nca}= l_\text{nca} \otimes (\check{B} \check{K}-\check{F} \check{C})$, $B_\text{a}= Q_\text{a} \otimes \check{B}_\text{a}$, and $Q_\text{a}=\text{diag}(q_{N_\text{nc}+1}, \, q_{N_\text{nc}+2}, \dots, \, q_{N}) \in \mathbb{R}^{N_\text{a} \times N_\text{a}}$ is a diagonal matrix.

\begin{definition}\label{def:agent_groups}
The set of $N$ agents is partitioned into two subsets, namely $V_\text{nc}= \{1, 2, \dots, \, N_{\text{nc}} \}$ and $V_\text{a}= \{N_{\text{nc}}+1, \, N_{\text{nc}}+2, \dots, \, N \}$, that contain uncompromised and attacked agents, respectively.
\end{definition}

\begin{definition}\label{def:nca}
	The set of uncompromised agents that receive information from the set of attacked agents are defined by $V_\text{nca}=\{i_\text{nca}, \, i_{\text{nca}}+1, \dots , \, i_\text{nca}+N_\text{nca}-1 \}$, where $N_\text{nca}$ is the number of uncompromised agents that receive information from the agents in $V_\text{a}$. In other words, $j \in V_\text{nca}$ if and only if $j \in V_\text{nc}$ and there exists $i \in V_\text{a}$ such that $i \in \mathcal{N}_j$.
\end{definition}

\begin{assumption}\label{assu:not_root}
	None of agents in the set $V_\text{r}$ is directly under attack, i.e., $V_\text{r} \cap V_\text{da}=0$.
\end{assumption}

Without loss of generality, let us assume that the first $N_\text{da}$ agents in the set $V_\text{a}$ create the set of directly attacked agents. We are now in a position to state our main result of this subsection.
\begin{theorem}\label{th:detectable}
	Let the Assumptions \ref{assu:spanning}-\ref{assu:not_root} hold and the cyber attacks on agents in the set $V_\text{da}$ are designed as $a_{\hat{x}}^{ji}(t)={a}_0- \hat{x}_j (t)$ and $a_{y}^{ji}(t)=C {a}_0- y_j (t)$ for $j=1, \dots , \, N_\text{nc}$ and $j \in \mathcal{N}_i$, where $C {a}_0 \neq 	 \lim_{t \rightarrow \infty} y_{i_\text{r}} (t)$, $ {a}_0 \neq \lim_{t \rightarrow \infty} x_{i_\text{r}} (t)$, and ${a}_0 \in \mathbb{R}^n$ is a constant vector. Consequently, the following can be stated:
	\begin{enumerate}
		\item The cyber attacks are undetectable on the set of nodes $V_\text{a}$ if $A+BcK$ and $A+\tilde{\lambda}_k cFC$ for $k=2, \dots, \, N_\text{a}+1$ are Hurwitz, where $\tilde{\lambda}_k \neq 0$ denotes the $k$-th eigenvalue of the Laplacian matrix
		$$\tilde{L}_\text{a}= \begin{bmatrix}
		0 & \mathbf{0}_{1 \times N_{a}} \\
		-D_\text{a} & L_\text{a} \\
		\end{bmatrix},$$
		with $D_\text{a}=[d_{i_\text{a}}^\text{nc}, \, d_{i_\text{a}+1}^\text{nc}, \dots, \, d_{i_{\text{a}+N_\text{da}-1}}^\text{nc}, \, \mathbf{0}_{N_{a}-N_{da}}^\top]^\top$ and $d_{i}^\text{nc}= |\mathcal{N}_{i} \cap V_\text{nc}|$ for $i \in V_\text{da}$.
		
		\item The cyber attacks are detectable by agents that belong to the set $V_\text{nca}$.
		
	\end{enumerate}
\end{theorem}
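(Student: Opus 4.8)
The plan is to establish the two claims by different routes: claim (1) by replaying the virtual-root argument of Theorem \ref{th:root_attack} on the attacked subsystem alone, and claim (2) by contrasting the asymptotic values that the two groups $V_\text{a}$ and $V_\text{nc}$ are forced to.

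For claim (1), I would first note that the chosen signals make each directly attacked agent $i \in V_\text{da}$ receive $\hat{x}_j + a_{\hat{x}}^{ji} = a_0$ and $y_j + a_{y}^{ji} = Ca_0$ in place of the true data of every uncompromised neighbor $j$, so the genuine states $x_\text{nc}$ are erased from the dynamics of $V_\text{a}$ and replaced by the constant $a_0$. Moreover, by the reachability definition no agent in $V_\text{a} \setminus V_\text{da}$ can have a genuine incoming link from $V_\text{nc}$ (such a link would give it a healthy path from a root), so the entire coupling $A_\text{anc} x_\text{nc}$ in \eqref{e:MAS_a} enters only through the $V_\text{da}$ rows and is exactly the part that the attack overwrites. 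Thus $x_\text{a}$ decouples from the true $x_\text{nc}$, and the adversary acts as a virtual agent broadcasting $a_0$ into each $i \in V_\text{da}$ through $d_i^\text{nc} = |\mathcal{N}_i \cap V_\text{nc}|$ links, which is precisely what $D_\text{a}$ and $\tilde{L}_\text{a}$ encode. Since $V_\text{a}$ together with this virtual root contains a directed spanning tree, I would, exactly as in Theorem \ref{th:root_attack}, write $\dot{\tilde{x}}_\text{a} = (\tilde{I}_\text{a} \otimes \check{A} + \tilde{L}_\text{a} \otimes (\check{B}\check{K} - \check{F}\check{C}))\tilde{x}_\text{a}$, form the disagreement vector relative to the left eigenvector of $\tilde{L}_\text{a}$ (Lemma \ref{lem:tree}), and block-diagonalize by its Jordan form; the nonzero modes are similar to $\begin{bmatrix} A + \tilde{\lambda}_k cFC & \mathbf{0} \\ -\tilde{\lambda}_k cFC & A + BcK \end{bmatrix}$, so the stated Hurwitz conditions drive the disagreement to zero. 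Hence every agent in $V_\text{a}$ reaches $a_0$, the residuals \eqref{e:res_i_y}--\eqref{e:res_i_x} of these agents (including their spoofed terms) vanish, and the attack is undetectable on $V_\text{a}$.

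For claim (2), the contrast is that the uncompromised group is anchored to the genuine consensus $\lim_{t \to \infty} x_{i_\text{r}}(t)$ while $V_\text{a}$ is driven to $a_0$. I would argue that $V_\text{nc}$, governed by \eqref{e:MAS_nc} with $x_\text{a}$ entering only as a bounded exogenous input through $A_\text{nca}$, converges to the root-determined consensus, using that the roots $V_\text{r} \subset V_\text{nc}$ are not directly attacked (Assumption \ref{assu:not_root}). By hypothesis $Ca_0 \neq \lim_{t \to \infty} y_{i_\text{r}}(t)$, so the attacked neighbors settle at an output strictly different from that of the uncompromised agents. Then for any $j \in V_\text{nca}$, which is uncompromised so $q_j = 0$, the residual collapses to the genuine disagreement $res^j_y(t) = \sum_{k \in \mathcal{N}_j}\|y_k(t) - y_j(t)\|$, and by Definition \ref{def:nca} this sum contains at least one term $\|y_i(t) - y_j(t)\|$ with $i \in V_\text{a}$. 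Since $y_i \to Ca_0$ while $y_j \to \lim_{t\to\infty} y_{i_\text{r}}(t) \neq Ca_0$, this term does not vanish, so $res^j_y$ does not converge to zero and the attack is detectable by the agents in $V_\text{nca}$.

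The step I expect to be the real obstacle is the convergence claim behind claim (2): showing that the uncompromised agents are not themselves dragged toward $a_0$ by the genuine information they continue to receive from their attacked neighbors via $A_\text{nca} x_\text{a}$. I would need to verify that the partitioned dynamics \eqref{e:MAS_nc} keep $V_\text{nc}$ pinned to the root-determined value, i.e., that a persistent $a_0$-valued forcing entering only at the $V_\text{nca}$ nodes cannot override the roots; this amounts to a stability and invariance analysis of the block $A_\text{nc}$ together with the fact that the roots have no corrupted incoming information. Once this is secured, the Jordan-form argument of claim (1) and the residual bookkeeping of claim (2) are routine.
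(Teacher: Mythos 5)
Your treatment of claim (1) is essentially the paper's own proof: the same virtual-adversary root agent, the same augmented Laplacian $\tilde{L}_\text{a}$, and the same reduction to the stated Hurwitz conditions via the disagreement vector and Jordan decomposition borrowed from Theorem \ref{th:root_attack}. Your explicit justification that no agent in $V_\text{a}\setminus V_\text{da}$ can have a genuine incoming link from $V_\text{nc}$ (such a link would be healthy, since only $V_\text{da}$ agents have compromised incoming links, and would hand that agent a healthy path from a root) is a detail the paper glosses over, and it is correct.

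The genuine gap is in claim (2), and it is exactly the step you flagged: your argument requires the uncompromised group, in particular the agents in $V_\text{nca}$, to stay pinned at the root-determined consensus value, and that claim is false in general. An agent $j\in V_\text{nca}$ keeps receiving \emph{genuine} information from its attacked neighbors, whose states converge to $a_0$; the coupling term $A_\text{nca}x_\text{a}(t)$ in \eqref{e:MAS_nc} therefore converges to a quantity proportional to the disagreement between $\check{x}_j$ and $a_0$, which does not vanish unless $j$ itself is dragged to $a_0$. Hence $j$ cannot remain on the nominal consensus trajectory: it settles (if at all) at some intermediate value determined by both anchors. The paper's own simulation exhibits this: agent 3, the sole member of $V_\text{nca}$, visibly departs from agents 1 and 2 after the attack at $t=10$ s and only rejoins them once the quasi-covert attack is injected at $t=30$ s (Fig. \ref{fig:x_cov}). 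So the auxiliary lemma you propose to prove (that a persistent $a_0$-valued forcing at the $V_\text{nca}$ nodes cannot move $V_\text{nc}$, so that $y_j \to \lim_{t\to\infty} y_{i_\text{r}}(t)$) is unprovable, and the invariance analysis of $A_\text{nc}$ you outline cannot close the gap because the statement it targets is false. The paper sidesteps this entirely by arguing by contradiction: if the residuals vanished, the whole network would reach a consensus; since the attacked block is decoupled and converges to $a_0$, that consensus value would have to be $a_0$, contradicting $a_0 \neq \lim_{t\to\infty} x_{i_\text{r}}(t)$ and $Ca_0 \neq \lim_{t\to\infty} y_{i_\text{r}}(t)$; therefore no global consensus is reached, and since the agents in $V_\text{a}$ do agree among themselves asymptotically, the persistent disagreement appears across the $V_\text{a}$-to-$V_\text{nca}$ links, i.e., in the residuals \eqref{e:res_i_y}--\eqref{e:res_i_x} of the $V_\text{nca}$ agents. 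The repair for your proof is to adopt this contradiction route, or equivalently to weaken your intermediate claim from ``$y_j$ converges to the root value'' to ``$y_j$ cannot converge to $Ca_0$,'' which is all that detectability requires.
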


\begin{proof}
	Substituting the cyber attack signals into \eqref{e:MAS_a} the following can be expressed:
	\begin{equation}\label{e:th_x_a}
	\begin{split}
	\dot{x}_\text{a} (t)= (I_{N_\text{a}} \otimes \check{A}+L_\text{a} \otimes (\check{B} \check{K} - \check{F} \check{C})) x_\text{a} (t)+ D_\text{a} \otimes \check{B}_\text{a} \check{a}_0 ,
	\end{split}
	\end{equation}
	
	\noindent where $\check{a}_0 = [a_0^\top , ( C a_0)^\top]^\top$. It follows from \eqref{e:th_x_a} that the nodes which belong to the set $V_\text{a}$ do not receive information from the nodes in the set $V_\text{nc}$. Since the cyber attack signal $a_0$ is the same for all agents in the set $V_\text{da}$, one can consider the adversary as a virtual agent that is the root of a directed spanning tree contained in the graph $\mathcal{G}_\text{a}$ with $\tilde{L}_\text{a}$ as its Laplacian matrix. The virtual adversary agent transmits its information $a_0$ and $C a_0$ to all the directly attacked agents. Consequently, the dynamics of the attacked agents augmented with the virtual adversary agent can be derived as given below:
\begin{equation*}
\begin{split}
\dot{\tilde{x}}_\text{a} (t)= (\tilde{I}_\text{a} \otimes \check{A}+\tilde{L}_\text{a} \otimes (\check{B} \check{K} - \check{F} \check{C})) \tilde{x}_\text{a} (t),
\end{split}
\end{equation*}

\noindent where $\tilde{x}_\text{a} (t)=[\mathbf{1}_2^\top \otimes {a}_0^\top , \, x_\text{a}(t)^\top]^\top$ and $\tilde{I}_\text{a}=\begin{bmatrix}
0 & \mathbf{0}_{1 \times N_{a}} \\
\mathbf{0}_{N_{a} } & I_{N_\text{a}}
\end{bmatrix}$.

Following along the same steps as in the proof of the Theorem \ref{th:root_attack} it can be concluded that if $A+BcK$ and $A+\tilde{\lambda}_k cFC$, with $k=2, \dots, \, N_\text{a}+1$, are Hurwitz, the states of all agents in the set $V_\text{a}$ reach $a_0$ as $t \rightarrow \infty$. Therefore, according to the Definition \ref{def:undetectable} the cyber attacks are undetectable by agents that belong to the set $V_\text{a}$.

Suppose all agents in \eqref{e:MAS_a} and \eqref{e:MAS_nc} reach a consensus. Following along the results  previously derived it follows that states of the attacked agents, and consequently uncompromised agents, should reach $a_0$, which contradicts the assumption that $C {a}_0 \neq 	\lim_{t \rightarrow \infty} y_{i_\text{r}} (t)$ and $ {a}_0 \neq \lim_{t \rightarrow \infty} x_{i_\text{r}} (t)$. Therefore, the entire network cannot reach a consensus. The residuals for the agent $j \in V_\text{nc}$ with $i \in V_\text{a}$ and $i \in \mathcal{N}_j$ are nonzero as $t \rightarrow \infty$ since $\hat{x}_i (t) \neq \hat{x}_j (t)$ and $y_i (t) \neq y_j (t)$. This completes the proof of the theorem.
\end{proof}

\begin{remark}
	Suppose in Theorem \ref{th:detectable} one has $C {a}_0 = \lim_{t \rightarrow \infty} y_{i_\text{r}} (t)$ and $ {a}_0 = \lim_{t \rightarrow \infty} x_{i_\text{r}} (t)$, which implies that the attack signals will not change the consensus set point and follow the control objective of the MAS. Hence, it is reasonable to assume that the control objective of adversaries differs from the group of agents, in other words $C {a}_0 \neq 	 \lim_{t \rightarrow \infty} y_{i_\text{r}} (t)$ and $ {a}_0 \neq \lim_{t \rightarrow \infty} x_{i_\text{r}} (t)$.
\end{remark}

\subsection{Quasi-Covert Cyber Attack on the MAS Network}\label{sub_s:covert}

In the Theorem \ref{th:detectable}, it was shown that  cyber attack signals are detectable by agents that belong to the set $V_\text{nc}$ and receive information from agents in the set $V_\text{a}$. Hence, if an adversary attacks the communication channels that connect agents in these two sets, they may be able to manipulate the transmitted information such that impacts of cyber attacks are made hidden and eliminated. In this paper, this attack methodology where impacts of cyber attacks on the set $V_\text{nc}$ are eliminated is denoted by the ``\textit{quasi-covert cyber attacks}".

\begin{assumption}\label{assu:knowsTheParameters}
	The adversary has knowledge on the parameters of the MAS \eqref{e:agent_i} and the observer-based consensus protocol \eqref{e:obs_i}.
\end{assumption}

To conceal impacts of cyber attacks on the set $V_\text{nc}$ the adversary needs to attack the outgoing communication links from the nodes in the set $V_\text{a}$ to the agents that belong to $V_\text{nc}$. These agents are included in the set $V_\text{nca}$. To achieve this, under Assumption \ref{assu:knowsTheParameters} the adversary is capable of running the following process:
\begin{IEEEeqnarray}{rCl}
	\dot{x}^\text{c}_i (t) &=& A x^\text{c}_i (t)+BcK \epsilon_i^\text{c}(t) , \label{e:agent_covert} \\
	\dot{\hat{x}}^\text{c}_i (t) &=& A \hat{x}^\text{c}_i (t)+BcK \epsilon_i^\text{c}-cF \zeta_i^\text{c} (t) , \label{e:obs_covert}
\end{IEEEeqnarray}

\noindent for $i = N_{\text{nc}}+1, \, N_{\text{nc}}+2, \dots, \, N$, where $x^\text{c}_i (t), \, \hat{x}^\text{c}_i (t)  \in \mathbb{R}^n$, for any arbitrary initial conditions $x^\text{c}_i (0)$ and $\hat{x}^\text{c}_i (0)$, and
\begin{equation*}
\begin{split}
\epsilon_i^\text{c}(t)=& \sum_{k \in \mathcal{N}_i \cap V_\text{a}}(\hat{x}_i^\text{c} (t)- \hat{x}_k^\text{c} (t)) +\sum_{j \in \mathcal{N}_i \cap V_\text{nc}}(\hat{x}_i^\text{c} (t) - \hat{x}_j (t)), \\
\zeta_i^\text{c} (t)= & \sum_{k \in \mathcal{N}_i \cap V_\text{a}}((C x_k^\text{c} (t) - C x_i^{c} (t))+ C(\hat{x}^\text{c}_i-\hat{x}^\text{c}_k (t))) \\
& + \sum_{j \in \mathcal{N}_i \cap V_\text{nc}}(( y_j (t) - C x_i^{c} (t))+ C(\hat{x}^\text{c}_i-\hat{x}_j (t))).
\end{split}
\end{equation*}

\begin{lem}\label{lem:quasi_covert}
	Let the Assumptions \ref{assu:spanning}-\ref{assu:knowsTheParameters} hold and agents in the set $V_\text{da}$ are under cyber attacks as specified in the Theorem \ref{th:detectable}. The adversary is capable  of eliminating impacts of these cyber attacks and make them undetectable on the set $V_\text{nca}$ by adding cyber attack signals $\hat{a}_{\hat{x}}^{ij} (t)= \hat{x}^\text{c}_i (t) - \hat{x}_i (t)$ and $\hat{a}_y^{ij} (t)=C x^\text{c}_i (t)-y_i (t)$ for $i \in V_\text{a}$ and $j \in V_\text{nca}$ to the outgoing communication channels of agents in the set $ V_\text{a}$ to agents that belong to $V_\text{nca}$.
\end{lem}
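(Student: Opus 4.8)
The plan is to show that, once the covert signals $\hat{a}_{\hat{x}}^{ij}(t)=\hat{x}^\text{c}_i(t)-\hat{x}_i(t)$ and $\hat{a}_y^{ij}(t)=Cx^\text{c}_i(t)-y_i(t)$ are injected on the links from $i\in V_\text{a}$ to $j\in V_\text{nca}$, every agent in $V_\text{nca}$ effectively receives the pair $(\hat{x}^\text{c}_i(t),\,Cx^\text{c}_i(t))$ in place of the true attacked data $(\hat{x}_i(t),\,y_i(t))$. The idea is then that the adversary's auxiliary processes \eqref{e:agent_covert}--\eqref{e:obs_covert} constitute a clean, attack-free copy of the $V_\text{a}$-agents driven by the genuine data of their uncompromised neighbors; coupling this copy to the true uncompromised agents reconstructs exactly the nominal closed loop \eqref{e:agent_i}--\eqref{e:obs_i} on the full graph $\mathcal{G}$, for which consensus is guaranteed by Assumption \ref{assu:observer_design}.

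First I would make the substitution explicit: for every $j\in V_\text{nca}$ the information received from a neighbor $i\in\mathcal{N}_j\cap V_\text{a}$ becomes $\hat{x}_i(t)+\hat{a}_{\hat{x}}^{ij}(t)=\hat{x}^\text{c}_i(t)$ and $y_i(t)+\hat{a}_y^{ij}(t)=Cx^\text{c}_i(t)$, while data exchanged inside $V_\text{nc}$ and the data fed from $V_\text{nc}$ into the auxiliary processes are untouched. Define the stacked variable whose $i$-th block is $(x^\text{c}_i,\hat{x}^\text{c}_i)$ for $i\in V_\text{a}$ and $(x_j,\hat{x}_j)$ for $j\in V_\text{nc}$. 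I would then verify, block by block, that the coupling terms $\epsilon^\text{c}_i,\zeta^\text{c}_i$ of \eqref{e:agent_covert}--\eqref{e:obs_covert} and the neighbor sums of the uncompromised agents coincide term-for-term with the nominal $\epsilon_i,\zeta_i$ of \eqref{e:obs_i} once the attacked-neighbor entries are read as the auxiliary states. Because the auxiliary block uses the same internal adjacency as $L_\text{a}$ and the same boundary coupling $l_\text{anc}$, while the $V_\text{nca}$ block is now closed through $l_\text{nca}$ onto the auxiliary states, this stacked variable satisfies the attack-free augmented dynamics \eqref{e:sys_aug_1} with $\check{a}_i(t)\equiv 0$ on the original graph $\mathcal{G}$.

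Next, since $\mathcal{G}$ satisfies Assumption \ref{assu:spanning} and the observer gains satisfy Assumption \ref{assu:observer_design}, the reconstructed attack-free system reaches consensus in the sense of Definition \ref{def:consensus}: all auxiliary outputs $Cx^\text{c}_i$ and states $\hat{x}^\text{c}_i$ ($i\in V_\text{a}$), together with all true $y_j,\hat{x}_j$ ($j\in V_\text{nc}$), converge to a common value as $t\to\infty$. Evaluating the residuals \eqref{e:res_i_y}--\eqref{e:res_i_x} at any $j\in V_\text{nca}$, the terms associated with neighbors in $V_\text{a}$ now read $\|Cx^\text{c}_i-y_j\|$ and $\|\hat{x}_j-\hat{x}^\text{c}_i\|$, while those associated with neighbors in $V_\text{nc}$ are the usual differences; all of these are differences of quantities that reach the same consensus point, so $\lim_{t\to\infty}res^j_y(t)=0$ and $\lim_{t\to\infty}res^j_{\hat{x}}(t)=0$. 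By Definition \ref{def:undetectable} the attack is therefore undetectable on $V_\text{nca}$, which is the claim.

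I expect the main obstacle to be the bookkeeping in the second step: one must check that substituting $(\hat{x}^\text{c}_i,Cx^\text{c}_i)$ for the attacked data closes the loop consistently, i.e. that the auxiliary processes are driven by exactly the genuine $V_\text{nc}$ data prescribed by the real graph (available to the adversary through the man-in-the-middle access of Assumption \ref{assu:allattacked} and the model knowledge of Assumption \ref{assu:knowsTheParameters}), so that no mismatch term survives and the reconstructed system is literally \eqref{e:sys_aug_1} with zero attack input rather than merely a perturbation of it.
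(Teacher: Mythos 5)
Your proposal is correct and follows essentially the same route as the paper: the adversary's auxiliary processes \eqref{e:agent_covert}--\eqref{e:obs_covert} are coupled to the genuine $V_\text{nc}$ data, the covert injections substitute $(\hat{x}^\text{c}_i, Cx^\text{c}_i)$ for the attacked agents' data on the links into $V_\text{nca}$, and the stacked vector $[x_\text{nc}^\top,\, x_\text{c}^\top]^\top$ is shown to obey the nominal attack-free closed loop on the full graph $L$ (the paper's \eqref{e:lem_covert3}), whence consensus under Assumption \ref{assu:observer_design} gives vanishing residuals and undetectability. Your explicit term-by-term evaluation of the residuals \eqref{e:res_i_y}--\eqref{e:res_i_x} at agents in $V_\text{nca}$ is a slightly more careful finish than the paper's appeal to Definition \ref{def:undetectable}, but it is the same argument.
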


\begin{proof}
Let us define $\check{x}_i^\text{c}(t)= [x^\text{c}_i (t)^\top , \, \hat{x}^\text{c}_i (t)^\top]^\top$ for $i \in V_\text{a}$ and $x_\text{c} (t)= [x^\text{c}_{N_\text{nc}+1} (t)^\top , \, x^\text{c}_{N_\text{nc}+2} (t)^\top, \dots , \, x^\text{c}_{N} (t)^\top]^\top$. Consequently, the dynamics of $x^\text{c} (t)$ can be derived as follows:
\begin{equation}\label{e:lem_covert1}
\dot{x}_\text{c} (t) = A_\text{a}x_\text{c} (t)+A_\text{anc} x_\text{nc} (t).
\end{equation}

By adding the cyber attack signals $\hat{a}_{\hat{x}}^{ij} (t)$ and $\hat{a}_{y}^{ij} (t)$ to the outgoing communication channels of agents in the set $V_\text{a}$, \eqref{e:MAS_nc} can be reformulated in the following form:
\begin{equation}\label{e:lem_covert2}
\dot{x}_\text{nc} (t) = A_\text{nc} x_\text{nc} (t)+A_\text{nca}x_\text{c} (t) .
\end{equation}

One can augment \eqref{e:lem_covert1} and \eqref{e:lem_covert2} as follows:
\begin{equation}\label{e:lem_covert3}
\dot{\bar{x}}(t)=(I_N \otimes \check{A}+L \otimes (\check{B} \check{K}-\check{F} \check{C}))\bar{x}(t),
\end{equation}

\noindent where $\bar{x}(t)=[x_\text{nc} (t)^\top , \, x_\text{c} (t)^\top]^\top$. If the observer-based control protocol is designed such that the MAS reaches a consensus, then the augmented dynamics in \eqref{e:lem_covert3} also reaches a consensus. Hence, according to the Definition \ref{def:undetectable}, the cyber attacks on nodes in $V_\text{da}$ are undetectable by the entire MAS and their impacts on agents in $V_\text{nc}$ are eliminated.
\end{proof}

\section{Event-Triggered Cyber Attack Detection Methodology}\label{s:event}

\subsection{Event-Triggered Detector Module}
Our objective in this section is to interrupt disclosure capabilities of the adversary by employing an event-triggered protocol of information exchange among our proposed detector modules.

The event-triggered detector for the $i$-th agent is designed as follows:
\begin{equation}\label{e:detector_i}
\begin{split}
\dot{z}_i (t)=& A_\text{z} {z}_i (t) + B_\text{z} \hat{x}_i (t) +F_\text{z} \sum_{ j \in \mathcal{N}_i} (e^{A_\text{z}(t-t^j_{k_j})} z_j (t^j_{k_j}) \\
&-e^{A_\text{z}(t-t^i_{k_i})} z_i (t_{k_i}^i) +q_{i}^\text{z} a_\text{z}^{ji}(t^j_{k_j})), \, \, \, \, i=1, \dots, N,
\end{split}
\end{equation}

\noindent where ${z}_i (t) \in \mathbb{R}^n$ is the state of detector for agent $i$, $t_{k_i}^i$ denotes the time of the most recent triggering event of the agent $i$, $k_i \in \mathbb{N}$ indicates the $k_i$-th event on the agent $i$, $z_i (t_{k_i}^i)$ denotes the latest broadcast state of the detector for the $i$-th agent, $A_\text{z}$ is a diagonal Hurwitz matrix, and the matrices $(A_\text{z}, \, B_\text{z}, \, F_\text{z})$ are of appropriate dimensions that should be designed. Also $q_{i}^\text{z}=1$ indicates that the $i$-th detector is under cyber attacks, and $q_{i}^\text{z}=0$ if it is not under attack, and $ a_\text{z}^{ji}(t^j_{k_j}) \in \mathbb{R}^n$ denotes the cyber attack signal on the received information from the neighboring agents. It is worth noting that given the detector \eqref{e:detector_i} and the consensus-based observer \eqref{e:obs_i} the agents receive output measurement information, observer states, and detector states from their neighboring agents.

Similar to \cite{zhang2013observer}, we define the state error for the $i$-th agent as
\begin{equation}\label{e:z_error}
e^\text{z}_i(t)= e^{A_\text{z}(t-t^i_{k_i})} z_i(t_{k_i}^i)-z_i (t), \, \, \, \, t \in [t_{k_i}^i, \, t_{k_i+1}^i).
\end{equation}

\noindent Moreover, the triggering function on the agent $i$ is defined as
\begin{equation}\label{e:z_func}
f_i^\text{z} (t, e^\text{z}_i(t))=\|e^\text{z}_i(t) \|-c_\text{z} e^{-\alpha t},
\end{equation}

\noindent where $c_\text{z}$ and $\alpha$ are positive constants to be selected and designed. The triggering function \eqref{e:z_func} determines the occurrence of an event by the agent $i$. Hence, $f_i^\text{z} (t, e^\text{z}_i(t)) \geq 0$ implies that an event is triggered by the agent $i$. Consequently, this agent updates its detector's state such that $z_i(t_{k_i+1}^i)=z_i (t)$, and $t_{k_i+1}^i=t$. Subsequently, the updated state $z_i(t_{k_i+1}^i)$ is broadcast to agents that agent $i$ belongs to their neighboring set and they use the updated state in their detectors. Due to an event by the $i$-th agent, the state error \eqref{e:z_error} of this agent is reset to zero.

One can augment states of detectors into the vector $z(t)=[z_1(t)^\top , \, z_2(t)^\top , \dots, z_N (t)^\top]^\top$ such that dynamics of detectors can be expressed as follows:
\begin{equation}\label{e:z}
\begin{split}
\dot{z}(t)= & (I_N \otimes A_\text{z}) z(t) +(I_N \otimes B_\text{z}) \hat{x}(t)- L \otimes F_\text{z}(e_\text{z}(t) \\
&+z(t)) +Q_\text{z} \otimes a_\text{z}(t) ,
\end{split}
\end{equation}

\noindent where $\hat{x}(t)=[\hat{x}_1(t)^\top, \, \hat{x}_2(t)^\top , \dots, \, \hat{x}_N (t)^\top]^\top$, $e_\text{z}(t)=[e_1^\text{z}(t)^\top , \, e_2^\text{z}(t)^\top , \dots, e_N^\text{z} (t)^\top]^\top$, $Q_\text{z}=\text{diag}(q_1^\text{z}, \, q_2^\text{z}, \dots , \, q_N^\text{z})$, $a_\text{z}(t)=[a_\text{z}^1 (t)^\top, \, a_\text{z}^2 (t)^\top, \dots, a_\text{z}^N (t)^\top]^\top$, and $a_\text{z}^i (t)=\sum_{j \in \mathcal{N}_i} a_\text{z}^{ji}(t^j_{k_j})$ for $i=1, \dots, N$.

\begin{theorem}\label{th:triggerd}
	Let the Assumptions \ref{assu:spanning} and \ref{assu:observer_design} hold. Consider the MAS \eqref{e:sys_aug_1} and the detector \eqref{e:detector_i} under cyber attack free conditions, with the triggering function parameters \eqref{e:z_func} satisfy $0<c_\text{z}$ and $0<\alpha< -\text{max} \, \text{Re}(\lambda(\tilde{A}_\text{z}))$, where $\lambda(\tilde{A}_\text{z})$ denotes the eigenvalue of $\tilde{A}_\text{z}=I_{N} \otimes {A}_\text{z}-\Delta_\text{z} \otimes F_\text{z}$, and $\Delta_\text{z}$ can be computed in a similar manner as described for $\Delta$ in the proof of Theorem \ref{th:root_attack}, where the Laplacian matrix is now $L$. The detectors in \eqref{e:z} reach a consensus if and only if $A_\text{z}- \lambda_i F_\text{z}$, $i=2, \dots , N$, are Hurwitz, where $\lambda_i \neq 0$ is the $i$-th eigenvalue of $L$. Moreover, the detector \eqref{e:z} does not exhibit Zeno behavior under attack free conditions.
\end{theorem}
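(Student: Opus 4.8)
The plan is to analyze \eqref{e:z} under attack-free conditions (so that $Q_\text{z}=0$ and the attack terms vanish) and recognize it as a linear consensus network driven by two exogenous signals. First I would rewrite \eqref{e:z} as
\[
\dot z(t)=(I_N\otimes A_\text{z}-L\otimes F_\text{z})z(t)+(I_N\otimes B_\text{z})\hat x(t)-(L\otimes F_\text{z})e_\text{z}(t),
\]
whose homogeneous matrix is $I_N\otimes A_\text{z}-L\otimes F_\text{z}$ and whose inputs are the observer states $\hat x(t)$ and the aggregated triggering error $e_\text{z}(t)$. I would then reuse the Jordan decomposition of the Laplacian introduced in the proof of Theorem~\ref{th:root_attack}: since $\mathcal G$ contains a directed spanning tree, Lemma~\ref{lem:tree} supplies $T=[\mathbf 1_N\ Y]$ with $T^{-1}=[r_0^\top;\,W]$ and $T^{-1}LT=\mathrm{blkdiag}(0,\Delta_\text{z})$, where $\Delta_\text{z}$ carries the $N-1$ nonzero eigenvalues $\lambda_i$. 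Setting $\xi(t)=(T^{-1}\otimes I_n)z(t)=[\xi_1^\top,\xi_{2:N}^\top]^\top$ decouples the consensus mode $\xi_1$ from the disagreement modes $\xi_{2:N}$, so that (because $T=[\mathbf 1_N\ Y]$ is invertible) the detectors reach consensus in the sense of Definition~\ref{def:consensus} if and only if $\xi_{2:N}(t)\to 0$.

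For the sufficiency in Part~1, the disagreement dynamics read
\[
\dot\xi_{2:N}(t)=\tilde A_\text{z}\,\xi_{2:N}(t)+(W\otimes B_\text{z})\hat x(t)-(\Delta_\text{z}W\otimes F_\text{z})e_\text{z}(t),\qquad \tilde A_\text{z}=I_{N-1}\otimes A_\text{z}-\Delta_\text{z}\otimes F_\text{z}.
\]
Since $\Delta_\text{z}$ is (block) triangular with the $\lambda_i$ on its diagonal, $\tilde A_\text{z}$ is block triangular with diagonal blocks $A_\text{z}-\lambda_iF_\text{z}$, so it is Hurwitz precisely when each $A_\text{z}-\lambda_iF_\text{z}$, $i=2,\dots,N$, is Hurwitz, which is the stated condition. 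I would then show both forcing terms vanish asymptotically: because $W\mathbf 1_N=0$, the term $(W\otimes B_\text{z})\hat x(t)$ depends only on the observer disagreements, which decay by Assumption~\ref{assu:observer_design}; and the triggering rule \eqref{e:z_func} enforces $\|e_i^{\text z}(t)\|\le c_\text{z}e^{-\alpha t}$ for all $t$, so the error term decays exponentially. Invoking Lemma~\ref{lem:event} with $H=\tilde A_\text{z}$ gives $\|e^{\tilde A_\text{z}t}\|\le c\,e^{\lambda_\text{z}^\text{m}t}$ with $\max\,\mathrm{Re}(\lambda(\tilde A_\text{z}))<\lambda_\text{z}^\text{m}<0$; choosing $\lambda_\text{z}^\text{m}$ with $-\lambda_\text{z}^\text{m}>\alpha$ (possible exactly because $0<\alpha<-\max\,\mathrm{Re}(\lambda(\tilde A_\text{z}))$) makes the variation-of-constants integral of the two decaying inputs bounded by a vanishing envelope, so $\xi_{2:N}(t)\to 0$ and consensus holds. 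For necessity, if some $A_\text{z}-\lambda_iF_\text{z}$ is not Hurwitz then $\tilde A_\text{z}$ carries a mode with nonnegative real part; as the inputs decay to zero they cannot cancel this mode for all initial data, so $\xi_{2:N}$ fails to converge and consensus fails.

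For Part~2 I would exclude Zeno behavior by lower-bounding the inter-event times on every bounded interval. Differentiating \eqref{e:z_error} and substituting \eqref{e:detector_i} with $e^{A_\text{z}(t-t^i_{k_i})}z_i(t_{k_i}^i)=e_i^{\text z}(t)+z_i(t)$ yields
\[
\dot e_i^{\text z}(t)=A_\text{z}e_i^{\text z}(t)+g_i(t),\qquad g_i(t)=-B_\text{z}\hat x_i(t)-F_\text{z}\sum_{j\in\mathcal N_i}\big((e_j^{\text z}-e_i^{\text z})+(z_j-z_i)\big).
\]
On any interval $[0,T]$ the detector and observer states are continuous, hence bounded, and $\|e_i^{\text z}\|\le c_\text{z}$, so $\|g_i(t)\|\le M_T$ for some finite $M_T$. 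Right after an event the error resets to $e_i^{\text z}(t_{k_i}^i)=0$, and integrating $\tfrac{d}{dt}\|e_i^{\text z}\|\le\|A_\text{z}\|\,\|e_i^{\text z}\|+M_T$ from zero gives
\[
\|e_i^{\text z}(t)\|\le\frac{M_T}{\|A_\text{z}\|}\big(e^{\|A_\text{z}\|(t-t_{k_i}^i)}-1\big).
\]
A new event requires this quantity to reach the threshold, which on $[0,T]$ satisfies $c_\text{z}e^{-\alpha t}\ge c_\text{z}e^{-\alpha T}>0$; equating the two bounds produces a strictly positive lower bound $\tau_{\min}(T)>0$ on the inter-event time, so only finitely many events occur in $[0,T]$ for every $T$, which precludes Zeno behavior.

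The main obstacle I anticipate is the convergence argument in Part~1: both forcing terms only decay rather than vanish identically, so the threshold decay rate $\alpha$ must be matched against the contraction rate of $\tilde A_\text{z}$ through Lemma~\ref{lem:event} to guarantee that the driven disagreement still converges --- this is exactly where the hypothesis $0<\alpha<-\max\,\mathrm{Re}(\lambda(\tilde A_\text{z}))$ is indispensable. A secondary point is the Zeno analysis: because $A$ need not be Hurwitz, the consensus trajectory and hence $g_i$ may grow in $t$, so the lower bound on inter-event times is established interval-by-interval rather than uniformly, which nonetheless suffices to rule out finite-time accumulation of events.
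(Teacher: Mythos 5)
Your proposal is correct. On the consensus claim you follow essentially the paper's own route: the same spectral decomposition $T^{-1}LT=\mathrm{blkdiag}(0,\Delta_\text{z})$, the same reduction to disagreement dynamics driven by $\tilde A_\text{z}$, the same Hurwitz characterization via the diagonal blocks $A_\text{z}-\lambda_iF_\text{z}$, and the same use of Lemma~\ref{lem:event} with the hypothesis $0<\alpha<-\max\,\mathrm{Re}(\lambda(\tilde A_\text{z}))$ to absorb the exponentially decaying triggering error. You are in fact somewhat more careful than the paper on one point: the paper discards the observer-dependent forcing by asserting that $\dot{\delta}_\text{z}(t)$ is ``independent of $\hat x(t)$ once the agents reach a consensus,'' whereas you keep it as $(W\otimes B_\text{z})\hat x(t)$, use $W\mathbf 1_N=0$ to identify it with the observer disagreement, and treat it as a second vanishing input in the variation-of-constants estimate; this closes a small gap rather than opening one. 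Where you genuinely diverge is the Zeno argument. The paper stays with disagreement quantities: it bounds $\|\dot e_\text{z}(t)\|$ by a globally decaying envelope $a_1e^{\lambda^\text{m}_{\tilde A_\text{z}}t}+a_2e^{-\alpha t}$ (leaning on the decay of $\delta_\text{z}(t)$, on the postulated bound $\|((I_N-\mathbf 1_N r^\top)\otimes B_\text{z})\hat x(t)\|\le M\|\delta_\text{z}(t)\|$, and on the questionable step $\|e_\text{z}(t)\|\le\|\delta_\text{e}(t)\|$ inferred from $\|I_N-\mathbf 1_N r^\top\|\ge\|I_N\|$), and then extracts a \emph{uniform} minimum inter-event time $\tilde\tau$ from $(a_1+a_2)\tilde\tau=\sqrt N c_\text{z}e^{-\alpha\tilde\tau}$. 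You instead work per agent with the raw states: reset $e_i^\text{z}$ at events, bound the forcing $g_i$ by a constant $M_T$ on each bounded interval, apply Gr\"onwall, and compare against the threshold floor $c_\text{z}e^{-\alpha T}$ to get an interval-dependent bound $\tau_{\min}(T)>0$. The trade-off is clear: the paper's route buys the stronger and practically meaningful conclusion of a uniform lower bound on inter-event times, but it rests on norm inequalities that are not rigorously justified; your route is elementary and self-contained (it needs neither the consensus result nor those inequalities, and correctly accounts for the fact that $\hat x(t)$ itself may grow since $A$ need not be Hurwitz), at the price of a lower bound that may degrade as $T\to\infty$. Since Zeno behavior means accumulation of infinitely many events in finite time, your weaker conclusion fully establishes the theorem's claim.
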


\begin{proof}
Under cyber attack free conditions the expression \eqref{e:z} can be rewritten as
\begin{equation*}
\begin{split}
\dot{z}(t)= & (I_N \otimes A_\text{z}-L \otimes F_\text{z}) z(t) +(I_N \otimes B_\text{z}) \hat{x}(t) \\
& - (L \otimes F_\text{z})e_\text{z}(t),
\end{split}
\end{equation*}

Following along the derivations in \cite{yang2016decentralized}, and since by definition the triggering function \eqref{e:z_func} does not cross zero in the interval $t \in [t_{k_i}^i, \, t_{k_i+1}^i)$, we have $\|e^\text{z}_i(t) \|< c_\text{z} e^{-\alpha t}$, which implies that $\|e_\text{z}(t)\| < \sqrt{N} c_\text{z} e^{-\alpha t}$. Therefore, it can be concluded that $\|e_\text{z}(t)\| \rightarrow 0$ as $t \rightarrow \infty$.

In a similar manner as in the proof of Theorem \ref{th:root_attack}, let us define the disagreement vector $\delta_\text{z}(t)=z(t)-(\mathbf{1}_{N} {r}^\top \otimes I_n)z(t)$. Since under cyber attack free conditions states of the observers $\hat{x}(t)$ reach a consensus and $\mathbf{1}_N$ is the right eigenvector corresponding to the zero for $I_N-\mathbf{1}_{N} {r}^\top$, it follows that $\dot{\delta}_{\text{z}}(t)$ is independent of $\hat{x}(t)$ once the agents reach a consensus.

Let us define $\varepsilon_\text{z} (t)=(T^{-1}_\text{z} \otimes I_{n})\delta_\text{z} (t)=[\varepsilon_1^\text{z} (t)^\top, \, \varepsilon_{2:N}^\text{z} (t)^\top]^\top$. Similar to the proof of the Theorem \ref{th:root_attack} it can be shown that $	\varepsilon_1^\text{z} (t) = \mathbf{0}_{n}$ and
\begin{equation}\label{e:e_z}
\dot{\varepsilon}_{2:N}^\text{z} (t) = (I_{N} \otimes {A}_\text{z}-\Delta_\text{z} \otimes {F}_\text{z})\varepsilon_{2:N}^\text{z} (t)-(\Delta_\text{z} W_\text{z} \otimes F_\text{z}) e_{2:N}^\text{z}(t),
\end{equation}

\noindent where $e_{2:N}^\text{z}(t)= [e_2^\text{z}(t)^\top , \dots, e_N^\text{z} (t)^\top]^\top$, $T_\text{z} \in \mathbb{R}^{ N \times N}$, $Y_\text{z} \in \mathbb{R}^{N \times N-1}$, $W_\text{z} \in \mathbb{R}^{N-1 \times N}$, and the block diagonal matrix $\Delta_\text{z} \in \mathbb{R}^{N-1 \times N-1}$ has diagonal entries that are equal to the nonzero eigenvalues of ${L}$ such that $T_\text{z}= [\mathbf{1}_{N} \, Y_\text{z}], \, T^{-1}_\text{z}= \begin{bmatrix}
{r}^\top \\
W_\text{z}
\end{bmatrix}$ and $
 T^{-1}_\text{z} {L} T_\text{z}= J_\text{z}=\begin{bmatrix}
0 & \mathbf{0}_{1 \times N-1} \\
\mathbf{0}_{N-1} & \Delta_\text{z}
\end{bmatrix}$.

Following along the proof of Theorem \ref{th:root_attack}, it can be shown that \eqref{e:e_z} is stable if and only if $A_\text{z}- \lambda_i F_\text{z}$, for $i=2, \dots , N$, are Hurwitz.

To show that \eqref{e:z} does not exhibit the Zeno behavior one needs to show that the inter-event intervals are lower bounded. The solution to \eqref{e:e_z} can be derived as follows:
\begin{equation*}
\varepsilon_{2:N}^\text{z} (t)= e^{\tilde{A}_\text{z} t} \varepsilon_{2:N}^\text{z} (0)+ \int_{0}^{t} e^{\tilde{A}_\text{z}(t- \tau)}(\Delta_\text{z} W_\text{z} \otimes F_\text{z}) e_{2:N}^\text{z}(\tau) \text{d}\tau ,
\end{equation*}

\noindent where $\tilde{A}_\text{z}=I_{N} \otimes {A}_\text{z}-\Delta_\text{z} \otimes F_\text{z}$. From the Lemma \ref{lem:event}, it can be concluded that
\begin{equation}\label{e:e_norm}
\|\varepsilon_\text{z} (t) \|=\|\varepsilon_{2:N}^\text{z} (t)\| \leq \alpha_3 e^{\lambda_{\tilde{A}_\text{z}}^\text{m} t} + \alpha_2 e^{-\alpha t},
\end{equation}

\noindent where $\alpha_3=\alpha_1+\alpha_2$, $\alpha_1= c_{\tilde{A}_\text{z}} nN \|P_{\tilde{A}_\text{z}} \|  \|P_{\tilde{A}_\text{z}}^{-1} \| \|\varepsilon_\text{z} (0)\| $, $\alpha_2 = (c_{\tilde{A}_\text{z}}nN \|P_{\tilde{A}_\text{z}} \|  \|P_{\tilde{A}_\text{z}}^{-1} \| \|\Delta_\text{z} W_\text{z} \otimes F_\text{z}\| \sqrt{N} c_\text{z} )/(|\alpha + \lambda_{\tilde{A}_\text{z}}^\text{m}|)$, $c_{\tilde{A}_\text{z}} >0$ is a positive constant, and $\text{max} \, \text{Re}(\lambda(\tilde{A}_\text{z})) < \lambda_{\tilde{A}_\text{z}}^\text{m}< -\alpha <0$. It follows from \eqref{e:e_norm} that $\|\delta_\text{z} (t) \| \leq \|T_\text{z} \otimes I_n\| \|\varepsilon_\text{z} (t)\| \leq \beta_1 e^{\lambda_{\tilde{A}_\text{z}}^\text{m} t}+ \beta_2 e^{-\alpha t}$, where $\beta_1= \|T_\text{z} \otimes I_n\| \alpha_3$ and $\beta_2= \|T_\text{z} \otimes I_n\| \alpha_2$.

In the interval $t \in [t_{k_i}^i, \, t_{k_i+1}^i)$ the dynamics of $e_\text{z} (t)$ can be expressed as
\begin{equation*}
\dot{e}_\text{z} (t)= (I_N \otimes A_\text{z}- L \otimes F_\text{z}) e_\text{z} (t)-(I_N \otimes B_\text{z}) \hat{x}(t)- (L \otimes F_\text{z}) z(t).
\end{equation*}

\noindent Let us define $\delta_\text{e}(t)=e_\text{z} (t)-(\mathbf{1}_{N} {r}^\top \otimes I_n)e_\text{z} (t)$, which is governed by
\begin{equation*}\label{e:e_delta}
\begin{split}
\dot{\delta}_\text{e} (t)= & (I_N \otimes A_\text{z}- L \otimes F_\text{z}) \delta_\text{e}(t)-((I_N-\mathbf{1}_{N} {r}^\top) \otimes B_\text{z}) \hat{x}(t) \\
& - (L \otimes F_\text{z}) \delta_\text{z}(t).
\end{split}
\end{equation*}

\noindent Since it is assumed that the MAS is cyber attack free, $\delta_{\text{z}}(t)$ does not approach to zero unless $((I_N-\mathbf{1}_{N} {r}^\top) \otimes B_\text{z}) \hat{x}(t)$ approaches to zero. Thus, there exists a bounded scalar $M >0$ such that $\|((I_N-\mathbf{1}_{N} {r}^\top) \otimes B_\text{z}) \hat{x}(t) \| \leq M \|\delta_{\text{z}}(t)\|$. Moreover, given that $\|I_N-\mathbf{1}_{N} {r}^\top\|\geq \| I_N\|$, by definition one can conclude that $\|e_\text{z} (t) \| \leq \| \delta_\text{e} (t) \|$. Therefore, it implies  that
\begin{equation}
\begin{split}
\|\dot{e}_\text{z} (t) \| \leq & \| \dot{\delta}_\text{e} (t) \| \leq \|I_N \otimes A_\text{z} - L \otimes F_\text{z}\| \|I_N-\mathbf{1}_{N} {r}^\top\| \\
& \times \|e_\text{z}(t)\| +\|L \otimes F_\text{z}\| \|\delta_{\text{z}} (t) \|\\
& + \|((I_N-\mathbf{1}_{N} {r}^\top) \otimes B_\text{z}) \hat{x}(t) \| \leq  g_\text{z}(t), \nonumber
\end{split}
\end{equation}

\noindent where $g_\text{z}(t) \triangleq  a_1 e^{\lambda_{\tilde{A}_\text{z}}^\text{m} t} + a_2 e^{-\alpha t} $, $a_1=(\|L \otimes F_\text{z}\|+M) \beta_1$, and $a_2=(\|L \otimes F_\text{z}\|+M) \beta_2 + \sqrt{N} \|I_N \otimes A_\text{z} - L \otimes F_\text{z}\| \|I_N-\mathbf{1}_{N} {r}^\top\|$.

Let $t^*$ denote the latest triggering instant and consider $\tau^*=t-t^*$ as the time-interval between the two latest triggered events. Given that at the triggering instant $f_i^\text{z} (t, e^\text{z}_i(t))=0$, it can be concluded that in the $i$-th detector the next event cannot be triggered before $\| e^\text{z}_i (t) \| =c_\text{z} e^{-\alpha t}$, that implies $\| e_\text{z} (t) \| =\|\int_{t^*}^{t}\dot{e}_\text{z} (s) \text{d}s \| \leq \int_{t^*}^{t}g_\text{z}(s) (s) \text{d}s =  \sqrt{N} c_\text{z} e^{-\alpha t}$. Since $t^* \geq t$, we have $e^{ -\alpha t} \leq e^{-\alpha t^*} $ and $e^{\lambda_{\tilde{A}_\text{z}}^\text{m} t} \leq e^{\lambda_{\tilde{A}_\text{z}}^\text{m} t^*} $. Consequently, $\tau^*$ is lower bounded by $\bar{\tau}$, which is the solution to the equation $(a_1 e^{\lambda_{\tilde{A}_\text{z}}^\text{m} t^*} + a_2 e^{-\alpha t^*} )\bar{\tau}=\sqrt{N} c_\text{z} e^{-\alpha (t^*+\bar{\tau})}$, that is equivalent to $(a_1 e^{(\lambda_{\tilde{A}_\text{z}}^\text{m}+\alpha) t^*} + a_2 )\bar{\tau}=\sqrt{N} c_\text{z} e^{-\alpha \bar{\tau}}$.

Since $0<\alpha< |\lambda_{\tilde{A}_\text{z}}^\text{m}|< -\text{max} \, \text{Re}(\lambda(\tilde{A}_\text{z}))$, there exists $\tilde{\tau}$ such that $\tau^* \geq \bar{\tau} \geq \tilde{\tau}$, where $\tilde{\tau}$ is the strictly positive solution to the equation $(a_1 + a_2 )\tilde{\tau}=\sqrt{N} c_\text{z} e^{-\alpha \tilde{\tau}}$. Hence, $\tilde{\tau}$ is the lower bound on the inter-event times of the detector \eqref{e:detector_i}, which implies that there are no Zeno behavior. This completes the proof of the theorem.
\end{proof}

\begin{definition}\label{def:res}
	A cyber attack injected to the closed-loop MAS \eqref{e:agent_i_attack} and the observer \eqref{e:obs_i_attack} is detected if the residual signal
	\begin{equation}\label{e:res_z}
	res_\text{z}^i(t)= \sum_{j \in \mathcal{N}_i} \| z_j(t_{k_j}^j) - z_i (t_{k_i}^i) +q_{ji} a_\text{z}^{ji} (t) \|,
	\end{equation}
	\noindent satisfies the inequality $\|res_\text{z}^i(t)\| > \eta_\text{z}$, where $\eta_\text{z}$ is the cyber attack detection threshold.
\end{definition}

\begin{assumption}\label{assu:NotKnow}
	The adversaries do not have knowledge on the parameters of the event-triggered detector \eqref{e:detector_i}. Hence, they are not capable of designing $a_\text{z}^{ji}(t^j_{k_j})$ such that $z_j (t^j_{k_j})-z_i (t_{k_i}^i) +q_{ji} a_\text{z}^{ji}(t^j_{k_j})=0$ as $t^j_{k_j} \rightarrow \infty$.
\end{assumption}

\begin{remark}
	Since agents communicate their state detectors according to the triggering function \eqref{e:z_func}, the adversary does not have access to these states continuously. Consequently, it is quite reasonable to assume that the adversary does not have knowledge of the exact values of the parameters in \eqref{e:detector_i}.
\end{remark}

\begin{corollary}\label{cor:event_trig}
	Consider the Assumptions \ref{assu:spanning}-\ref{assu:NotKnow} hold and the MAS \eqref{e:sys_aug_1} is under quasi-covert cyber attacks as introduced in the Lemma \ref{lem:quasi_covert}. Given that the triggering function \eqref{e:z_func} has parameters that are provided in the Theorem \ref{th:triggerd}, let the cyber attack signal be denoted by $a_\text{z}^{ji}(t^j_{k_j})=a_\text{z0}-e^{A_\text{z}(t-t^j_{k_j})} z_j (t^j_{k_j})$ for $i \in V_\text{da}$ and $j \in V_\text{nc}$, where $a_\text{z0} \in \mathbb{R}^n$ is a constant vector. Consequently, the generated residual \eqref{e:res_z} by the $k$-th agent is nonzero if $k \in V_\text{a}$ or $k \in V_\text{nca}$, and $B_\text{z}$ is full column rank, and $A_\text{z}- \lambda_q F_\text{z}$, for $q=2, \dots , N$, are Hurwitz in detectors \eqref{e:z}, where $\lambda_q$ is defined as in Theorem \ref{th:triggerd}.
\end{corollary}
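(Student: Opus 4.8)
The plan is to combine the limiting behaviour of the observer layer under the quasi-covert attack with a consensus analysis of the detector layer, and then to force the persistence of the residual \eqref{e:res_z} by contradiction. First I would use Lemma \ref{lem:quasi_covert} together with Theorem \ref{th:detectable} to pin down the steady-state observer states that drive the detectors through the term $B_\text{z}\hat{x}_i(t)$ in \eqref{e:detector_i}. Under the quasi-covert attack the attacked agents satisfy $\hat{x}_i(t)\to a_0$ for $i\in V_\text{a}$, whereas the uncompromised agents (whose view is rendered consistent by the shadow process \eqref{e:agent_covert}--\eqref{e:obs_covert}) satisfy $\hat{x}_i(t)\to\bar{x}_0$ for $i\in V_\text{nc}$, with $\bar{x}_0=\lim_{t\to\infty}x_{i_\text{r}}(t)$; by hypothesis $a_0\neq\bar{x}_0$. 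This \emph{mismatch} between the constant forcings applied to the two groups of detectors is the mechanism that keeps the residual alive, and it is precisely the quantity the attacker cannot reach through the observer-layer channels.

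Next I would set up the detector dynamics \eqref{e:z}. Since $A_\text{z}$ is a diagonal Hurwitz matrix and $A_\text{z}-\lambda_q F_\text{z}$ is Hurwitz for $q=2,\dots,N$, every diagonal block of $I_N\otimes A_\text{z}-L\otimes F_\text{z}$ is Hurwitz, so the detector subsystem is exponentially stable; following the proof of Theorem \ref{th:triggerd}, the triggering function \eqref{e:z_func} drives $\|e_\text{z}(t)\|\to 0$ and rules out Zeno behaviour, so the broadcast errors vanish. I would also record the effect of the specified attack: on the compromised links $j\to i$ with $j\in V_\text{nc}$, $i\in V_\text{da}$, the choice $a_\text{z}^{ji}=a_\text{z0}-e^{A_\text{z}(t-t^j_{k_j})}z_j(t^j_{k_j})$ makes the \emph{propagated} incoming term in the detector dynamics \eqref{e:detector_i} collapse to the constant $a_\text{z0}$, which is exactly the detector-layer analogue of the root attack of Theorem \ref{th:root_attack}.

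With these preliminaries I would argue by contradiction. Suppose the detectors did reach a consensus, i.e. $res_\text{z}^k(t)\to 0$ for every $k$ on the cut between $V_\text{a}$ and $V_\text{nc}$. Combining the per-agent steady-state relations obtained from \eqref{e:detector_i} (the coupling terms vanish once all residual summands vanish) forces the forcings $B_\text{z}\hat{x}_i$ to coincide across the two groups, so that $B_\text{z}a_0=B_\text{z}\bar{x}_0$; since $B_\text{z}$ has full column rank this yields $a_0=\bar{x}_0$, contradicting the standing hypothesis. It then remains to localise the surviving disagreement to $V_\text{a}$ and $V_\text{nca}$. For $k\in V_\text{nca}$ the agent has an incoming edge from some $i\in V_\text{a}$ that carries \emph{no} detector attack, so that residual term directly exposes $z_i-z_k$ and inherits the $a_0$ versus $\bar{x}_0$ gap through $B_\text{z}$. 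For $k\in V_\text{a}$ (the directly attacked nodes $V_\text{da}$) the residual \eqref{e:res_z} is built from the \emph{raw} broadcasts $z_j(t^j_{k_j})$, whereas the attack was tuned to cancel the \emph{propagated} value $e^{A_\text{z}(t-t^j_{k_j})}z_j(t^j_{k_j})$; the uncancelled mismatch $(I-e^{A_\text{z}(t-t^j_{k_j})})z_j(t^j_{k_j})$ remains, and by Assumption \ref{assu:NotKnow} the adversary cannot choose $a_\text{z}^{jk}$ to annihilate $z_j(t^j_{k_j})-z_k(t^k_{k_k})+q_{jk}a_\text{z}^{jk}$, so $res_\text{z}^k\not\to 0$.

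The hard part will be handling the event-triggered sampling rigorously. The quantities in the detector dynamics \eqref{e:detector_i} are the propagated samples $e^{A_\text{z}(t-t^j_{k_j})}z_j(t^j_{k_j})$, while the residual \eqref{e:res_z} uses the raw broadcasts $z_j(t^j_{k_j})$, and I must show that an attack nulling the former cannot simultaneously null the latter and that this gap does not itself decay along the inter-event intervals, which are lower bounded by $\tilde{\tau}$ from Theorem \ref{th:triggerd}. Reconciling this with the facts $\|e_\text{z}(t)\|\to 0$ and the no-Zeno bound is the delicate step, and it is exactly where the full-column-rank condition on $B_\text{z}$ and the parameter-ignorance restriction of Assumption \ref{assu:NotKnow} carry the argument.
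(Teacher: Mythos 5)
Your treatment of the $V_\text{nca}$ part is essentially the paper's argument in different clothing: under the quasi-covert attack the observer layer does not reach consensus ($\hat{x}_i\to a_0$ on $V_\text{a}$ versus the nominal consensus value on $V_\text{nc}$), so the forcing terms $B_\text{z}\hat{x}_i$ in \eqref{e:detector_i} differ persistently across the cut, and full column rank of $B_\text{z}$ turns any assumed detector consensus into the contradiction $a_0=\bar{x}_0$. The paper states the same mechanism by observing that $\dot{\delta}_\text{z}(t)$ in the proof of Theorem \ref{th:triggerd} now depends on $\hat{x}(t)$, which no longer reaches consensus, so the detectors cannot reach consensus either; your explicit use of the full-rank hypothesis at this point is a reasonable sharpening.

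The genuine gap is in your treatment of $V_\text{a}$. First, you silently identify $V_\text{a}$ with $V_\text{da}$ (``for $k\in V_\text{a}$ (the directly attacked nodes $V_\text{da}$)''), but the corollary claims nonzero residuals for \emph{every} $k\in V_\text{a}$, and $V_\text{a}\setminus V_\text{da}$ is nonempty in general (in the paper's own simulation $V_\text{da}=\{4,5\}$ while $V_\text{a}=\{4,5,6\}$). Agents in $V_\text{a}\setminus V_\text{da}$ have no attacked incoming detector links, so neither your raw-versus-propagated mismatch nor Assumption \ref{assu:NotKnow} says anything about their residuals, and your global contradiction only shows that not all residuals can vanish, not which ones survive. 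Second, even on $V_\text{da}$ your mechanism is unsound as a persistence argument: the mismatch $(I-e^{A_\text{z}(t-t^j_{k_j})})z_j(t^j_{k_j})$ is exactly zero at every triggering instant $t=t^j_{k_j}$, so it cannot by itself keep the residual bounded away from zero, and the fallback on Assumption \ref{assu:NotKnow} establishes non-vanishing by fiat rather than by analysis of the specified attack (indeed it would make the full-rank and Hurwitz hypotheses irrelevant for that part). The paper's route covers all of $V_\text{a}$ at once: following the steps of Theorem \ref{th:detectable}, the specified attack makes the adversary a virtual root broadcasting the constant $a_\text{z0}$ to the detector subnetwork of $V_\text{a}$, which would therefore reach consensus on $a_\text{z0}$ (rendering the attack invisible inside $V_\text{a}$) \emph{except} that every real detector carries the forcing $B_\text{z}\hat{x}_i\to B_\text{z}a_0\neq 0$, which the virtual root's dynamics lack; hence no detector in $V_\text{a}$ can track $a_\text{z0}$, disagreement persists throughout $V_\text{a}$ and not only at $V_\text{da}$, and all residuals in that set remain nonzero. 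You yourself flag the sampling reconciliation as the unresolved ``delicate step''; that is exactly the place where this virtual-root-without-forcing argument, rather than the sampling mismatch, has to carry the proof.
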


\begin{proof}
    Suppose $B_\text{z}$ is a full column rank matrix and $A_\text{z}- \lambda_q F_\text{z}$, for $q=2, \dots , N$, are Hurwitz. Consequently, following along the steps as in proof of Theorem \ref{th:detectable}, the detectors that belong to the set $V_\text{a}$ do not reach $a_{z0}$ as the consensus set point since dynamics of the virtual adversary agent does not have $B_\text{z}\hat{x}_i$. Hence, cyber attacks are detectable by using the detectors of agents that belong to this set.

    Since under cyber attacks the agents observers states  do not reach a consensus, $\dot{\delta}_{\text{z}}(t)$ in the proof of Theorem \ref{th:triggerd} depends on $\hat{x}(t)$, which implies that the detectors do not reach a consensus as well and the cyber attacks are detectable on the set $V_\text{nca}$. This completes the proof of the corollary.
\end{proof}

\section{Numerical Simulation Case Study}\label{s:example}
In this case study, cyber attacks that are introduced in the Theorem \ref{th:detectable} and the quasi-covert cyber attacks in the Lemma \ref{lem:quasi_covert} are investigated. Moreover, the effectiveness of event triggered detector that was proposed in Section \ref{s:event} in detecting the quasi-covert cyber attacks is demonstrated and illustrated. A MAS consisting of 6 agents along with their observer-based consensus protocols having the dynamics as given in \eqref{e:agent_i} and \eqref{e:obs_i}, respectively, with the following parameters are studied \cite{li2009consensus} in:
\begin{equation*}
\begin{split}
A & =\begin{bmatrix}
-2 & 2 \\
-1 & 1
\end{bmatrix}, \, B=\begin{bmatrix}
1 \\
0
\end{bmatrix}, \, C=\begin{bmatrix}
1 & 0 \\
0 & 1
\end{bmatrix}, \\
F &= \begin{bmatrix}
15 & 0 \\
15 & 15
\end{bmatrix}, \, K=\begin{bmatrix}
2 & -10
\end{bmatrix}, \, c=-2.
\end{split}
\end{equation*}

\noindent The Laplacian matrix of the communication links among agents is given by $L=$[1, -1, 0, 0, 0, 0; -1, 1, 0, 0, 0, 0; -1, 0, 2, -1, 0, 0; 0, -1, 0, 2, 0, -1; 0, 0, -1, 0, 1, 0; 0, 0, 0, 0, -1, 1]. The detector parameters in \eqref{e:detector_i} are $A_\text{z}  =\begin{bmatrix}
-1 & 0 \\
0 & -2
\end{bmatrix}$, $B_\text{z}=I_2 \times 5$, $F_\text{z}= I_2 \times 10$, $c_\text{z}=0.002$, and $\alpha=0.2$.

Agents 1 and 2 are the roots of the spanning trees as contained in graph $\mathcal{G}$ of the network. In this case study, agents 4 and 5 are directly under cyber attacks such that $V_\text{da}=\{4,5\}$, $V_\text{a}=\{4,5,6\}$, $V_\text{nc}=\{1,2,3\}$, and $V_\text{nca}=\{3\}$.

In Fig. \ref{fig:res}, the residuals of agents in presence of cyber attacks as introduced in the Theorem \ref{th:detectable} are shown. After the occurrence of the cyber attack at $t=10$ (s) it is detected by the agent 3 that belongs to the set $V_\text{nca}$, whereas it is undetectable by the rest of network agents.

The impacts of the quasi-covert cyber attack on MAS are illustrated in Fig. \ref{fig:x_cov}. As can be observed in this figure the agent 3 states after the quasi-covert cyber attack is injected at $t=30$ (s) reach to those of the agents 1 and 2. Consequently, this cyber attack is undetectable on the entire network as shown in Fig. \ref{fig:res_cov}. By using the event-triggered detectors that are proposed in \eqref{e:detector_i}, the residuals \eqref{e:res_z} are generated and shown in Fig. \ref{fig:res_z_cov}. It follows from this figure that all agents' residuals that belong to the set $V_\text{a}$ are nonzero and exceed the threshold $\eta_\text{z}=3$. Moreover, the agent 3 residual exceeds the threshold as well.

\begin{figure}[!t]
	\centering
\includegraphics[scale= 0.34]{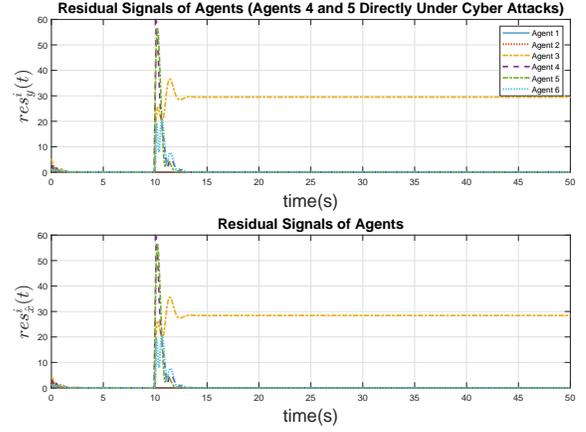}
	\caption{Residuals of agents while the cyber attacks are injected at $t=10$ (s).}
	\label{fig:res}
\end{figure}

\begin{figure}[!t]
	\centering
\includegraphics[scale= 0.34]{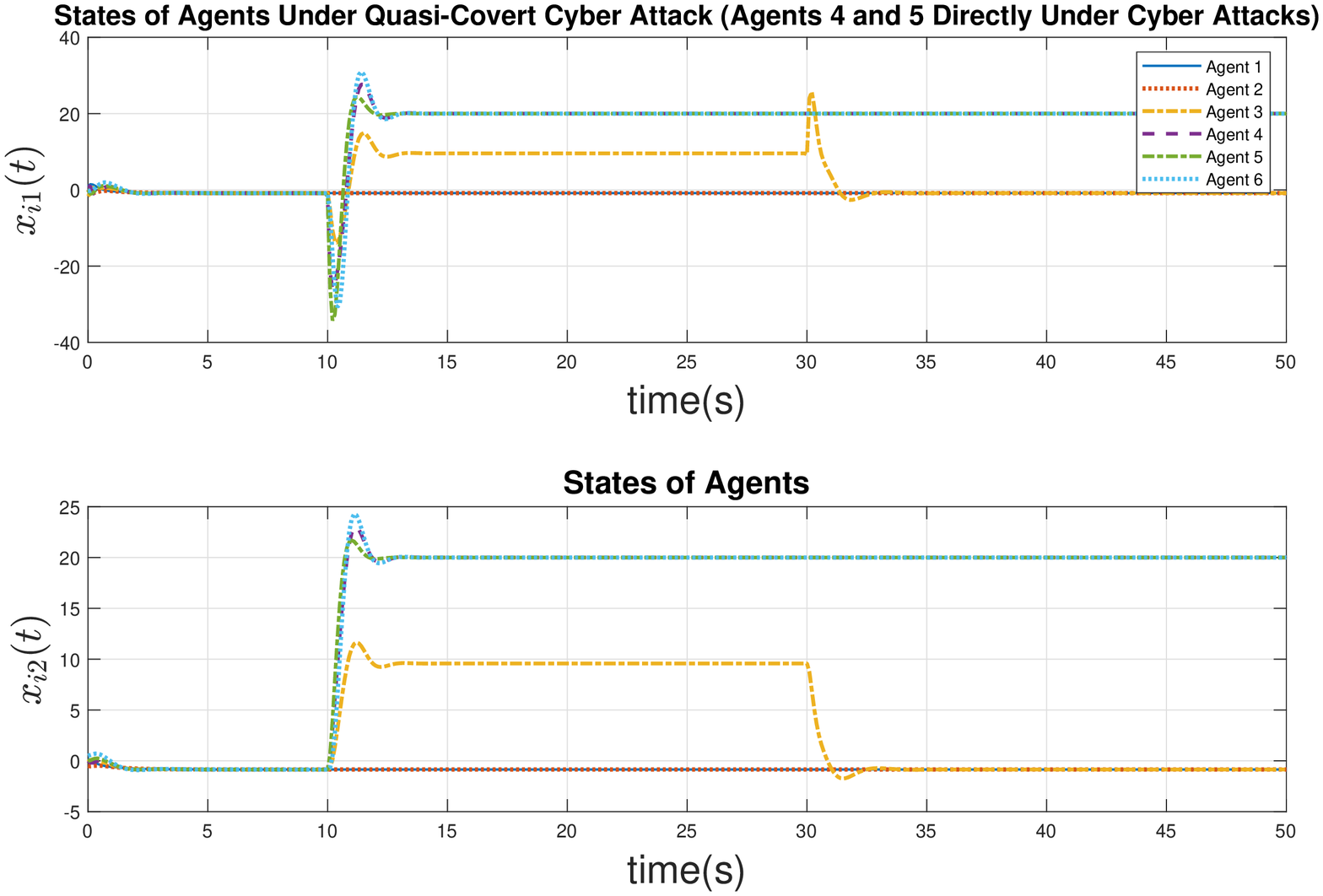}
	\caption{States of agents in presence of cyber attacks at $t=10$ (s) and quasi-covert cyber attacks at $t=30$ (s).}
	\label{fig:x_cov}
\end{figure}

\begin{figure}[!t]
	\centering
\includegraphics[scale= 0.34]{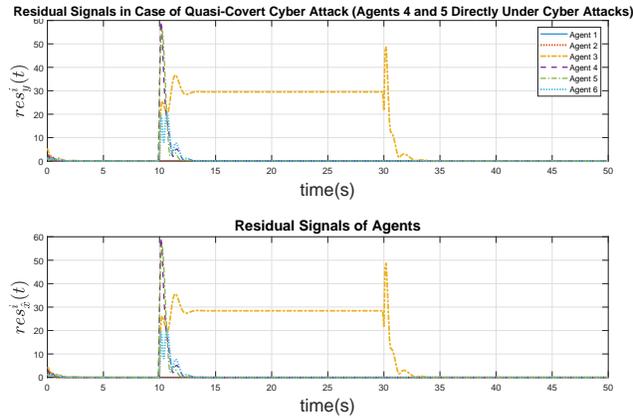}
	\caption{Residuals approach to zero after occurrence of the quasi-covert cyber attacks at $t=30$ (s).}
	\label{fig:res_cov}
\end{figure}

\begin{figure}[!t]
	\centering
\includegraphics[scale= 0.41]{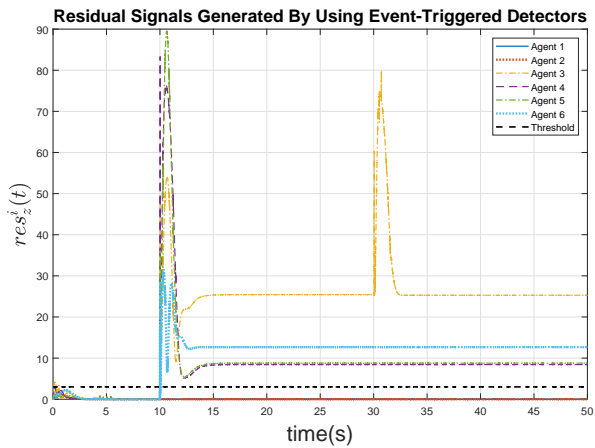}
	\caption{Residuals that are generated by using the event-triggered detectors and exceed the threshold in presence of quasi-covert cyber attacks injected at $t=30$ (s).}
	\label{fig:res_z_cov}
\end{figure}

\section{Conclusion}\label{s:conclu}
In this paper, detectability of cyber attacks on the communication links of certain teams of multi-agent systems has been investigated. A definition that can be used to specify undetectable cyber attacks on MAS was developed and proposed. It was shown that cyber attacks on communication links of the root of a directed spanning tree graph are undetectable. Moreover, cyber attacks on the non-root agents were investigated. It was shown that cyber attacks on non-root agents can be detected by the set of agents provided that one can determine an uncompromised directed path from the root of the graph to these agents. Novel quasi-covert cyber attacks were introduced that can be injected to maintain and ensure  these attacks on non-root agents remain undetectable by the entire network. Finally, an event-triggered detector was proposed that is capable of detecting quasi-covert cyber attacks. A challenging topic for future investigation is to study event-triggered detectors and detection of cyber attacks in presence of disturbances, noise, and uncertainties.

\bibliographystyle{IEEEtran}
\bibliography{IEEEabrv,CDCRef}

\begin{thebibliography}{10}
\providecommand{\url}[1]{#1}
\csname url@samestyle\endcsname
\providecommand{\newblock}{\relax}
\providecommand{\bibinfo}[2]{#2}
\providecommand{\BIBentrySTDinterwordspacing}{\spaceskip=0pt\relax}
\providecommand{\BIBentryALTinterwordstretchfactor}{4}
\providecommand{\BIBentryALTinterwordspacing}{\spaceskip=\fontdimen2\font plus
\BIBentryALTinterwordstretchfactor\fontdimen3\font minus
  \fontdimen4\font\relax}
\providecommand{\BIBforeignlanguage}[2]{{%
\expandafter\ifx\csname l@#1\endcsname\relax
\typeout{** WARNING: IEEEtran.bst: No hyphenation pattern has been}%
\typeout{** loaded for the language `#1'. Using the pattern for}%
\typeout{** the default language instead.}%
\else
\language=\csname l@#1\endcsname
\fi
#2}}
\providecommand{\BIBdecl}{\relax}
\BIBdecl

\bibitem{1333204}
R.~{Olfati-Saber} and R.~M. {Murray}, ``Consensus problems in networks of
  agents with switching topology and time-delays,'' \emph{IEEE Transactions on
  Automatic Control}, vol.~49, no.~9, pp. 1520--1533, Sep. 2004.

\bibitem{4804653}
N.~{Meskin} and K.~{Khorasani}, ``Actuator fault detection and isolation for a
  network of unmanned vehicles,'' \emph{IEEE Transactions on Automatic
  Control}, vol.~54, no.~4, pp. 835--840, April 2009.

\bibitem{dfdfi}
I.~Shames, A.~M. Teixeira, H.~Sandberg, and K.~H. Johansson, ``Distributed
  fault detection for interconnected second-order systems,'' \emph{Automatica},
  vol.~47, no.~12, pp. 2757 -- 2764, 2011.

\bibitem{zhang2011optimal}
H.~Zhang, F.~L. Lewis, and A.~Das, ``Optimal design for synchronization of
  cooperative systems: state feedback, observer and output feedback,''
  \emph{IEEE Transactions on Automatic Control}, vol.~56, no.~8, pp.
  1948--1952, 2011.

\bibitem{li2009consensus}
Z.~Li, Z.~Duan, G.~Chen, and L.~Huang, ``Consensus of multiagent systems and
  synchronization of complex networks: A unified viewpoint,'' \emph{IEEE
  Transactions on Circuits and Systems I: Regular Papers}, vol.~57, no.~1, pp.
  213--224, 2009.

\bibitem{adaii}
F.~Pasqualetti, F.~Dörfler, and F.~Bullo, ``Attack detection and
  identification in cyber-physical systems,'' \emph{IEEE Transactions on
  Automatic Control}, vol.~58, no.~11, pp. 2715--2729, Nov 2013.

\bibitem{ascf}
A.~Teixeira, I.~Shames, H.~Sandberg, and K.~H. Johansson, ``A secure control
  framework for resource-limited adversaries,'' \emph{Automatica}, vol.~51, no.
  Supplement C, pp. 135 -- 148, 2015.

\bibitem{siolcps}
A.~{Baniamerian} and K.~{Khorasani}, ``Security index of linear cyber-physical
  systems: A geometric perspective,'' in \emph{2019 6th International
  Conference on Control, Decision and Information Technologies (CoDIT)}, April
  2019, pp. 391--396.

\bibitem{9029298}
A.~{Baniamerian}, K.~{Khorasani}, and N.~{Meskin}, ``Determination of security
  index for linear cyber-physical systems subject to malicious cyber
  attacks*,'' in \emph{2019 IEEE 58th Conference on Decision and Control
  (CDC)}, Dec 2019, pp. 4507--4513.

\bibitem{mustafaa}
A.~{Mustafa} and H.~{Modares}, ``Attack analysis and resilient control design
  for discrete-time distributed multi-agent systems,'' \emph{IEEE Robotics and
  Automation Letters}, vol.~5, no.~2, pp. 369--376, April 2020.

\bibitem{dibaji}
S.~M. {Dibaji}, H.~{Ishii}, and R.~{Tempo}, ``Resilient randomized quantized
  consensus,'' \emph{IEEE Transactions on Automatic Control}, vol.~63, no.~8,
  pp. 2508--2522, Aug 2018.

\bibitem{boem2017distributed}
F.~Boem, A.~J. Gallo, G.~Ferrari-Trecate, and T.~Parisini, ``A distributed
  attack detection method for multi-agent systems governed by consensus-based
  control,'' in \emph{2017 IEEE 56th Annual Conference on Decision and Control
  (CDC)}.\hskip 1em plus 0.5em minus 0.4em\relax IEEE, 2017, pp. 5961--5966.

\bibitem{zhang2013observer}
H.~Zhang, G.~Feng, H.~Yan, and Q.~Chen, ``Observer-based output feedback
  event-triggered control for consensus of multi-agent systems,'' \emph{IEEE
  Transactions on Industrial Electronics}, vol.~61, no.~9, pp. 4885--4894,
  2013.

\bibitem{yang2016decentralized}
D.~Yang, W.~Ren, X.~Liu, and W.~Chen, ``Decentralized event-triggered consensus
  for linear multi-agent systems under general directed graphs,''
  \emph{Automatica}, vol.~69, pp. 242--249, 2016.

\bibitem{davoodiEvent}
M.~{Davoodi}, N.~{Meskin}, and K.~{Khorasani}, ``Event-triggered multiobjective
  control and fault diagnosis: A unified framework,'' \emph{IEEE Transactions
  on Industrial Informatics}, vol.~13, no.~1, pp. 298--311, Feb 2017.

\bibitem{HajiEvent}
S.~Hajshirmohamadi, F.~Sheikholeslam, M.~Davoodi, and N.~Meskin,
  ``Event-triggered simultaneous fault detection and tracking control for
  multi-agent systems,'' \emph{International Journal of Control}, vol.~92,
  no.~8, pp. 1928--1944, 2019.

\bibitem{ren2010distributed}
W.~Ren and Y.~Cao, \emph{Distributed coordination of multi-agent networks:
  emergent problems, models, and issues}.\hskip 1em plus 0.5em minus
  0.4em\relax Springer Science \& Business Media, 2010.

\bibitem{man}
Y.~Desmedt, ``Man-in-the-middle attack,'' \emph{Encyclopedia of cryptography
  and security}, pp. 759--759, 2011.

\bibitem{ugrinovskii2013conditions}
V.~Ugrinovskii, ``Conditions for detectability in distributed consensus-based
  observer networks,'' \emph{IEEE Transactions on Automatic Control}, vol.~58,
  no.~10, pp. 2659--2664, 2013.

\end{thebibliography}

\end{document}